\newcommand{\LoadPackagesNow}{}
\newcommand{\LoadPackageLater}[1]{%
   \g@addto@macro{\LoadPackagesNow}{%
      \usepackage{#1}%
   }%
}
\definecolor{pdfurlcolor}{rgb}{0,0,0.6}
\definecolor{pdffilecolor}{rgb}{0.7,0,0}
\definecolor{pdflinkcolor}{rgb}{0,0,0.6}
\definecolor{pdfcitecolor}{rgb}{0,0,0.6}
\g@addto@macro\bfseries{\boldmath}
\newcommand{\ifargdef}[3][{}]{\ifthenelse{\equal{#2}{}}{#1}{#3}}
\newtheoremstyle{claim}
	{\topsep}{\topsep}%
	{\itshape}%         Body font
	{}%         Indent amount (empty = no indent, \parindent = para indent)
	{\bfseries\boldmath}% Thm head font
	{}%        Punctuation after thm head
	{.5em}%     Space after thm head (\newline = linebreak)
	{\thmname{#1} \thmnumber{#2} \thmnote{(#3)}}%         Thm head spec
\newtheoremstyle{definition}
	{\topsep}{\topsep}%
	{}%         Body font
	{}%         Indent amount (empty = no indent, \parindent = para indent)
	{}% Thm head font
	{}%        Punctuation after thm head
	{.5em}%     Space after thm head (\newline = linebreak)
	{{\bfseries\thmname{#1} \thmnumber{#2}} \thmnote{(#3)}}%         Thm head spec
\newtheoremstyle{remark}
	{\topsep}{\topsep}%
	{}%         Body font
	{}%         Indent amount (empty = no indent, \parindent = para indent)
	{}% Thm head font
	{}%        Punctuation after thm head
	{.5em}%     Space after thm head (\newline = linebreak)
	{{\itshape\thmname{#1}:}}%         Thm head spec
\newtheoremstyle{example}
	{\topsep}{\topsep}%
	{}%         Body font
	{}%         Indent amount (empty = no indent, \parindent = para indent)
	{}% Thm head font
	{}%        Punctuation after thm head
	{.5em}%     Space after thm head (\newline = linebreak)
	{{\itshape\thmname{#1}:}}%         Thm head spec
\declaretheorem[style=claim,numberwithin=section]{theorem}
\declaretheorem[style=claim,sibling=theorem]{proposition}
\declaretheorem[style=claim,sibling=theorem]{remark}
\declaretheorem[style=definition,sibling=theorem]{definition}
\newcommand{\opleft}[1]{\mathopen{}\left#1}
\newcommand{\opright}[1]{\right#1\mathclose{}}
\newcommandx{\braces}[4]{%
\ifstrequal{#3}{normal}{#1#4#2}{%
\ifstrequal{#3}{auto}{\left#1#4\right#2}{%
\ifstrequal{#3}{opauto}{\opleft#1#4\opright#2}{%
#3#1#4#3#2}}}%
}
\newcommandx{\opannot}[3][3=\downarrow]{\stackrel{\mathclap{\substack{#1 \\ #3 \vspace{2pt}}}}{#2}}
\newcommandx{\lineannot}[3][3=\rightarrow]{\mathllap{\boxed{\text{\textsmaller{#1}}} #3} #2}
\newcommandx{\multilineannot}[4][4=\rightarrow]{\mathllap{\boxed{\parbox{#1}{\RaggedRight\textsmaller{#2}}} #4} #3}
\newcommand{\R}{\mathbb{R}} % real numbers
\newcommand{\C}{\mathbb{C}} % complex numbers
\newcommand{\E}{\mathbb{E}} 
\newcommand{\sprod}[1]{\left\langle #1 \right\rangle}
\DeclareMathOperator{\one}{\mathbb{1}}
\newcommand{\prb}[1]{\mathbb{P}\left( #1 \right)}
\newcommand{\erw}[1]{\mathbb{E}\left( #1 \right)}
\newcommand{\suchthat}[1][normal]{\ifstrequal{#1}{normal}{\mid}{#1|}} % seperator in sets (#1op = size)
\newcommandx{\intvcl}[3][1=normal]{\braces{[}{]}{#1}{#2, #3}} % closed interval (#1op=size, #2=left bound, #3=right bound)
\newcommandx{\intvop}[3][1=normal]{\braces{(}{)}{#1}{#2, #3}} % open interval
\newcommandx{\intvclop}[3][1=normal]{\braces{[}{)}{#1}{#2, #3}} % half-open interval (right)
\newcommandx{\intvopcl}[3][1=normal]{\braces{(}{]}{#1}{#2, #3}} % half-open interval (left)
\DeclareMathOperator*{\sign}{sign} % argmax
\newcommandx{\abs}[2][1=normal]{\braces{\lvert}{\rvert}{#1}{#2}} % absolute value
\newcommandx{\ceil}[2][1=normal]{\braces{\lceil}{\rceil}{#1}{#2}} % ceil
\newcommandx{\floor}[2][1=normal]{\braces{\lfloor}{\rfloor}{#1}{#2}} % floor
\newcommandx{\round}[2][1=normal]{\braces{[}{]}{#1}{#2}} % round
\newcommandx{\der}[1]{D^{#1}} % differential operator (#1 = multiindex)
\newcommandx{\partder}[4][1={},4={}]{\frac{\partial^{#4} #2}{\partial #3^{#4}}\ifargdef{#1}{\Big|_{#1}}} % partial derivative (#1=point of evaluation, #2=function, #3=variable, #4=order)
\newcommandx{\integ}[4][1={},2={}]{\int_{#1}^{#2} #3 \, #4} % integral (#1op=lower bound, #2op=upper bound, #3=integrand, #4=differential form)
\newcommandx{\asympffaster}[2][1=normal]{o\braces{(}{)}{#1}{#2}} % asymptotically faster (proper) (#1op=size)
\newcommandx{\asympfaster}[2][1=normal]{O\braces{(}{)}{#1}{#2}} % asymptotically faster
\newcommandx{\asympeq}[2][1=normal]{\Theta\braces{(}{)}{#1}{#2}} % asymptotically equal
\newcommandx{\asympsslower}[2][1=normal]{\omega\braces{(}{)}{#1}{#2}} % asymptotically slower (proper)
\newcommandx{\asympslower}[2][1=normal]{\Omega\braces{(}{)}{#1}{#2}} % asymptotically slower
\def\imod#1{\allowbreak\mkern1mu({\operator@font mod}\,\,#1)}
\newcommandx{\norm}[2][1=normal]{\braces{\|}{\|}{#1}{#2}} % norm
\renewcommandx{\sp}[3][1=normal]{\braces{\langle}{\rangle}{#1}{#2, #3}} % inner product (#1op=size, #2=left, #3=right)
\newcommand{\adj}[1]{{#1}^\ast} % adjoint operator
\newcommandx{\End}[2][2={}]{\mathcal{L}\opleft( #1 \ifargdef{#2}{, #2} \opright)} % endomorphism (#1=from, #2op=to)
\newcommandx{\measure}[2][1=normal]{\operatorname{vol}\braces{(}{)}{#1}{#2}} % Lebesgue-measure/volume of a set
\DeclareMathOperator{\supp}{supp} % support
\newcommandx{\Leb}[3][1={},3=normal]{L^{#2}\ifargdef{#1}{\braces{(}{)}{#3}{#1}}{}} % Lebesgue spaces (#1op=set, #2=exponent)
\newcommandx{\Lebnorm}[4][1=normal,3={2},4={}]{\norm[#1]{#2}_{\Leb[#4]{#3}}} % Lebesgue norm (#1op=size, #2=content, #3op=exponent, #4op=set)
\renewcommandx{\l}[3][1={},3=normal]{\ell^{#2}\ifargdef{#1}{\braces{(}{)}{#3}{#1}}} % lp sequence spaces (#1op=set, #2=exponent)
\newcommandx{\lnorm}[4][1=normal,3={2},4={}]{\norm[#1]{#2}_{\l[#4]{#3}}} % lp norm (#1op=size, #2=content, #3op=exponent, #4op=set)
\newcommandx{\Smooth}[4][1={},3={},4=normal]{C_{#3}^{#2}\ifargdef{#1}{\braces{(}{)}{#4}{#1}}} % space of differentiable functions (#1op=set, #2=order, #3op=modifier)
\newcommandx{\Schwartz}[2][1={},2=normal]{\mathscr{S}\ifargdef{#1}{\braces{(}{)}{#2}{#1}}} % space of Schwartz functions
\newcommandx{\Schwartzpoly}[2][1=normal]{\braces{\langle}{\rangle}{#1}{\abs[#1]{#2}} } % Schwartz polynomial
\newcommandx{\Tempdistr}[2][1={},2=normal]{\mathscr{S}'\ifargdef{#1}{\braces{(}{)}{#2}{#1}}} % tempered distributions
\newcommandx{\distrinp}[3][1=normal]{\braces{\langle}{\rangle}{#1}{#2, #3}} % evaluation of a tempered distribution (#1op=size, #2=distribution, #3=Schwartz function)
\newcommand{\Linedistr}[1][]{\mathfrak{L}\ifargdef{#1}{_{#1}}{}} % line distribution
\newcommandx{\ft}[3][1=default,2=auto]{
\ifstrequal{#1}{default}{\widehat{#3}}{
\ifstrequal{#1}{long}{{\braces{(}{)}{#2}{#3}}^{\wedge}}{}}} % Fourier transform (hat-notation) (#1op=long expression mode, #2op=size, #3=content)
\DeclareMathOperator{\tr}{tr}
\newcommandx{\ift}[3][1=default,2=auto]{
\ifstrequal{#1}{default}{\check{#3}}{
\ifstrequal{#1}{long}{{\braces{(}{)}{#2}{#3}}^{\vee}}{}}} % inverse Fourier transform (hat-notation) (#1op=long expression mode, #2op=size, #3=content)
\title{\bfseries Recovery of Binary Sparse Signals from Structured Biased Measurements}
\author{\hspace*{-1cm} Sandra Keiper\\[.5em]
  \small{\textsc{\hspace*{-1cm}Institut für Mathematik, Technische Universit\"at Berlin }}\\[.5em]
  %\hspace*{-1cm}E-mails: $^1$\href{mailto:keiper@math.tu-berlin.de}{keiper@math.tu-berlin.de}, $^2$\href{mailto:daegwans@gmail.com}{daegwans@gmail.com}
}
\begin{document}

%%% todos
\listoftodos
%%% content macros
%%% --> Basic notions <--
\newcommand{\OpAnalysis}[1]{T_{#1}} % analysis operator
\newcommand{\OpSynthesis}[1]{\adj T_{#1}} % synthesis operator
\newcommand{\OpFrame}[1]{S_{#1}} % frame operator
\newcommand{\defsf}{\varphi} % default Schwartz function
%%% --> Abstract Model <--
\newcommand{\InpSp}{\mathcal{H}} % inpainting space
\newcommand{\InpSpK}{\InpSp_K} % known Space
\newcommand{\ProK}{P_K} % projection on the known space
\newcommand{\InpSpM}{\InpSp_M} % missing space
\newcommand{\ProM}{P_M} % projection on the missing space
\newcommand{\sig}{x^0} % original, undamaged signal
\newcommand{\sigrec}{x^\star} % recovered signal
\newcommand{\PF}{\Phi} % default Parseval frame
\newcommand{\pf}{\phi} % default element of this Parseval frame
%%% --> Analysis Tools <--
\newcommand{\cluster}{\Lambda} % default cluster
\newcommand{\concentr}[2]{\kappa\ifargdef{#1}{\opleft( #1, #2 \opright)}} % concentration
\newcommand{\clustercoh}[2]{\mu_c \ifargdef{#1}{( #1 , #2)}} % cluster coherence (#1 = cluster, #2 = frame)
\newcommand{\anorm}[2]{\norm{#1}_{1,#2}} % l1-analysis norm (#2 = frame)
%%% --> Shearlets <--
\newcommand{\ver}{\mathrm{v}} % vertical component
\newcommand{\hor}{\mathrm{h}} % horizontal component
\newcommand{\dir}{\imath} % variable component
\newcommand{\meyerscal}{\phi} % Meyer scaling function
\newcommand{\Scalfunc}{\Phi} % coarse scaling function
\newcommand{\Corofunc}{W} % corona scaling function
\newcommand{\Coro}{\mathscr{K}} % frequency corona
\newcommand{\conefunc}{v} % generator of the cone function
\newcommand{\Conefunc}[1]{V_{(#1)}} % cone function
\newcommand{\Cone}[1]{\mathscr{C}_{(#1)}} % cone (#1 = ver/hor)
\newcommand{\pscal}{A} % scaling matrix
\newcommand{\pshear}{S} % shearing matrix
\newcommand{\pscalcone}[2]{A_{#1,(#2)}} % scaling matrix (#1 = alpha, #2 = ver/hor)
\newcommand{\shearcone}[1]{S_{(#1)}} % shearing matrix (#1 = shearing number)
\newcommand{\unishplain}{\psi} % universal shearlet element (without indices)
\newcommand{\unishplainft}{\ft{\unishplain}} % Fourier transformed universal shearlet element (without indices)
\newcommand{\unish}[5][{}]{\unishplain_{#3,#4,#5}^{#2\ifargdef{#1}{,(#1)}}} % universal shearlet element (#1=direction, #2=alpha, #3=scale, #4=shear, #5=translation)
\newcommand{\unishft}[5][{}]{\unishplainft_{#3,#4,#5}^{#2\ifargdef{#1}{,(#1)}}} % Fourier transformed universal shearlet element 
\newcommand{\Scalparamdomain}{\mathsf{A}} % discretization set of scaling parameters
\newcommand{\aj}{\alpha_j} % shortcut for scaling sequence
\newcommandx{\Unish}[3][1=\meyerscal,2=\conefunc,3=(\aj)_j]{\operatorname{SH}(#1, #2, #3)} % universal shearlet system (#1=Meyer scaling generator, #2=cone generator, #3=scaling sequence)
\newcommandx{\UnishLow}[1][1=\meyerscal]{\operatorname{SH}_{\mathrm{Low}}(#1)} % coarse scaling system (#1=Meyer scaling generator)
\newcommandx{\UnishInt}[3][1=\meyerscal,2=\conefunc,3=(\aj)_j]{\operatorname{SH}_{\mathrm{Int}}(#1, #2, #3)} % interior shearlet system
\newcommandx{\UnishBound}[3][1=\meyerscal,2=\conefunc,3=(\aj)_j]{\operatorname{SH}_{\mathrm{Bound}}(#1, #2, #3)} % boundary shearlet system
\newcommand{\Unishgroup}{\Gamma} % index set for \Unish
\newcommand{\Unishind}{\gamma} % index for some element of \Unish
\newcommand{\lmax}{l_j} % shortcut for maximal shearing number at scale j
%%% --> Model <--
\newcommand{\weight}{w} % weighting function
\newcommand{\wlen}{\rho} % length of the singularity
\newcommand{\model}{{\weight\Linedistr}} % line model
\newcommand{\modelrec}{\model^\star} % recovered model
\newcommand{\Corofilter}{F} % frequency filter
\newcommand{\mdiam}{h} % width of the mask
\newcommand{\mask}[1]{\mathscr{M}_{#1}} % mask set
%%% --> Image-Inpainting <--
\newcommand{\Unishshort}{\Psi} % shortcut for \Unish
\newcommand{\scalpm}[1]{#1^{\pm1}} % notation for neighboring technicality (#1=some set)
\newcommand{\translind}[2]{#1^{(#2)}} % directional index (#1=something, #2=ver/hor)
%%% title
\maketitle
%%% abstract

\begin{abstract}
In this paper we study the reconstruction of binary sparse signals from partial random circulant measurements. We show that the reconstruction via the least-squares algorithm is as good as the reconstruction via the usually used program basis pursuit. We further show that we need as many measurements to recover an $s$-sparse signal $x_0\in\R^N$ as we need to recover a dense signal, more-precisely an $N-s$-sparse signal $x_0\in\R^N$. We further establish stability with respect to noisy measurements.
\end{abstract}

\vspace{.1in}
\noindent\textbf{Keywords.}
Compressed Sensing, Sparse Recovery, Null Space Property, Finite Alphabet, Binary Signals, Dual Certificates\\
\textbf{AMS classification.} 15A12, 15A60, 15B52, 42A61, 60B20, 90C05, 94A12, 94A20
\vspace{.1in}

% \pagebreak
%%% content
\section{Introduction}
A recent mathematical framework that ensures recovery of sparse vectors from incomplete information is \emph{Compressed sensing}. In this context, incomplete information  refers to the fact that linear systems of the form
$$
Ax_0=y
$$ 
can only be solved uniquely for general $x_0\in \R^N$ if $A\in \R^{M,N}$ is quadratic and invertible, i.e., if $M<N$ the information is incomplete. By imposing an a-priori structure on $x_0$, however, the ill-posed problem for $M<N$ can be turned into a well-posed one. An important structure of $x_0$ is that of \emph{sparsity}, which means that only a few entries of $x_0$ are different from zero. Another relevant assumption on $x_0$ is that its entries stem from a finite alphabet. In this paper we will study sparse signals whose entries stem from a binary alphabet, e.g., $x_0\in \{0,1\}^N$. Such signals appear for example in wireless communications, where the transmitted signals are sequences of bits. Moreover, in certain types of communication networks it is appropriate to assume that only a few transmitters are active at a certain instance, which naturally induces sparsity.

Note that binary signals are in particular symmetric in the sense, that if $x_0$ is a dense signal, $\mathds{1}-x_0$ is sparse. In the recent publication \cite{FlKei}, we have proven that using shifted random matrices such signals can be recovered from a particularly small number of measurements and that this number reflects the mentioned symmetry of $x_0$. This means that we need the same number of measurements to recover an $s$-sparse signal as we need to recover an $(N-s)$-sparse signal.

The considered measurement matrices, however, are of somewhat limited use in applications. The reasons are diverse. Often the design of the measurement matrix is given by the applications with little or even  
no freedom to design it. Moreover, unstructured matrices, such as random or particularly Gaussian and Rademacher matrices, do not allow for a fast matrix multiplication, which may speed up recovery algorithms significantly. Beyond that storing a large unstructured matrix might be difficult. Hence, from a computational and application-technological point of view it would be desirable to use structured random matrices. 

Up until now there are only few good recovery conditions for completely deterministic measurement matrices available. One, therefore, should allow for some randomness to come into play. In this work, we will consider biased partial random circulant matrices for the measurement process. A precise definition of such matrices will be given in Subsection \ref{sec:IntMainRes} (Equation \eqref{eqn:BiasedCirc}). A main difference is that those matrices depend on only one random vector and accordingly $N$ random variables, whereas (sub-) Gaussians either depend on $MN$ random variables (following the definition in \cite{FR}) or on $M$ random row vectors (following the definition in \cite{Tropp}).

The concern of this work is to proof recovery guarantees for binary, sparse signals from biased random partial circulant and Toeplitz measurements.

Partial random circulant matrices (centered and not biased) have already been successfully applied to the classical compressed problem (see for example \cite{StrctRau}) and in one-bit compressed sensing (\cite{DJR,DS}).
In \cite{DJR}, \cite{DS} and \cite{FKS}, the reconstruction of sparse signals from binary Gaussian circulant measurements was also considered. The main difference is that in those papers the measurements $y=\sign(Ax_0)$ are assumed to be binary whereas we assume that the signal $x_0$ itself is binary. Besides, the proof techniques are very different.

\subsection{Preliminaries}
To put our results in a precise setting we first aim to introduce some notation. We define $[N]:=\left\{1,\dots, N\right\}$ and denote the standard unit vectors with $e_i$ for $i\in [N]$, i.e., the vector which is zero everywhere except at the $i$-th entry it is equal to one. Further, we denote with $\one$ the matrix or vector, respectively, which is equal to one in each entry. For a subset $S\subset [N]$ and a vector $x=(x_1,\dots,x_N)=(x(1),\dots, x(N))\in \R^N$ the notation $x_S$ refers to the following vector
$$x_S(i)=\begin{cases}
x_i, & \text{if}\quad i\in S\\ 0, & \text{else}.
\end{cases}$$
Further $\|\cdot\|_0$ denotes the $\ell_0$-norm and $\|\cdot\|_p$ the $\ell_p$-norm for $p>0$, i.e., for $x=[x_1,\dots,x_N]\in \R^N$
$$\|x\|_0:=\abs{\left\{i: x_i\neq 0\right\}} \qquad \text{and} \qquad \|x\|_p^p:=\sum_{i=1}^N \abs{x_i}^p.$$

For a matrix $A\in \R^{M,N}$ the adjoint matrix is denoted by $A^*$. The notation $\supp(x)$ refers to the support of a vector $x\in\R^N$, i.e., to the set of non-zero entries of $x$.

For two real numbers $a,b\in \R$ we write $a\gtrsim b$ if there exists a constant $c>0$, which is independent from $a$ and $b$, such that $a\ge c b$.

Let $\Theta\subset [N]$ then we define $$i+\Theta:=\{i+k\imod{N}: k\in \Theta\}.$$
Moreover, for a linear operator $L:\R^N\to \R^M$ with matrix representation $L=[L_{i,j}]_{i,j=1}^{MN}$ the Hilbert-Schmidt norm of $L$ is denoted by $$\norm{L}_{HS}:=\sqrt{\tr(L^*L)}=\sqrt{\sum_{i=1}^M\sum_{j=1}^NL_{i,j}^2},$$ and $$\norm{L}:= \sup_{\norm{w}_{2} \leq 1} \norm{Lw}_{2}$$ is the operator norm of $L$. Note that the operator norm of $L:\R^N\to \R^N$ corresponds to the largest eigenvalue in absolute of $L$. Further note that for $A,B\in \R^{N,N}$ with $B=[b_1,\dots,b_N]$ it holds true that
\begin{align}\label{eqn:EstHS}
\|AB\|_{\text{HS}}^2=\sum_{i=1}^N\|Ab_i\|_2^2\le\sum_{i=1}^N \|A\|^2\|b_i\|_2^2=\|A\|\|B\|^2_{\text{HS}},
\end{align}
where we used the consistency of the operator norm with the Euclidean norm $\|\cdot\|_2$.

For $A\in \R^{M,N}$ and $S\subset[N]$ we let $A_S$ be either the matrix which consists of the rows of $A$ corresponding to the indices in $S$, i.e. $A_S\in \R^{s,N}$, or the matrix whose rows corresponding to the indices in $S$ equal those of $A$ and all other columns are zero, i.e., $A_S\in \R^{M,N}$. We further let $A^S\in \R^{M,s}$ or $A^S\in \R^{M,N}$ be the matrix whose columns corresponding to $S$ are deleted or substituted with zero-columns.

Finally, we aim to recall Gershgorin circle theorem, which will be an important tool for our proofs.

\begin{theorem}[Gershgorin circle theorem \cite{Gersch}]\label{thm:Gers}
Let $A=[a_{i,j}]_{i,j=1}^N\in \C^{N,N}$ and for $i\in[N]$ let $R_{i}=\sum _{j\neq {i}}\left|a_{ij}\right|$ be the sum of the absolute values of the non-diagonal entries in the $i$-th row.  Further define $D(a_{ii},R_{i})\subseteq \mathbb {C}$ be a closed disc centered at $a_{ii}$ with radius $R_{i}$. Every eigenvalue of $A$ lies then within at least one of the \emph{Gershgorin discs} $D(a_{ii},R_{i})$.
\end{theorem}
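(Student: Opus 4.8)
The plan is to prove this by a direct argument using an eigenvector and its largest-magnitude coordinate. Let $\lambda \in \C$ be an arbitrary eigenvalue of $A$, and let $x = (x_1, \dots, x_N) \in \C^N \setminus \{0\}$ be an associated eigenvector, so that $Ax = \lambda x$. The whole proof reduces to looking at a single well-chosen coordinate of this vector identity, so my first step is to select the index $i \in [N]$ at which the modulus of the eigenvector is maximal, i.e., $|x_i| = \max_{j \in [N]} |x_j|$. Since $x \neq 0$, this maximal value is strictly positive, which will be crucial for dividing later.

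Next I would write out the $i$-th component of the equation $Ax = \lambda x$, which reads $\sum_{j=1}^N a_{ij} x_j = \lambda x_i$, and isolate the diagonal term to obtain
\begin{align}
(\lambda - a_{ii}) x_i = \sum_{j \neq i} a_{ij} x_j.
\end{align}
Taking absolute values and applying the triangle inequality on the right-hand side gives $|\lambda - a_{ii}|\,|x_i| \le \sum_{j \neq i} |a_{ij}|\,|x_j|$. Here I would invoke the defining property of the chosen index, namely $|x_j| \le |x_i|$ for every $j$, to bound each $|x_j|$ from above by $|x_i|$ and factor it out, yielding $|\lambda - a_{ii}|\,|x_i| \le |x_i| \sum_{j \neq i} |a_{ij}| = |x_i|\, R_i$.

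Finally, since $|x_i| > 0$, I can divide both sides by $|x_i|$ to conclude $|\lambda - a_{ii}| \le R_i$, which is precisely the statement that $\lambda$ lies in the closed disc $D(a_{ii}, R_i)$. As $\lambda$ was an arbitrary eigenvalue, every eigenvalue lies in at least one Gershgorin disc, completing the proof. There is no serious obstacle in this argument; the only point requiring care is the selection of the coordinate of maximal modulus, since it is exactly the inequality $|x_j| \le |x_i|$ that allows the off-diagonal sum to be controlled by $R_i$, and the strict positivity of $|x_i|$ that legitimizes the final division. Everything else is a routine application of the triangle inequality.
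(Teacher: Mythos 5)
Your proof is correct: it is the standard argument for Gershgorin's theorem, and every step (choice of the maximal-modulus coordinate, the triangle inequality, and the division by $|x_i|>0$) is sound. Note that the paper itself does not prove this statement at all --- it is a classical result recalled with a citation to the original literature --- so your proposal supplies exactly the canonical proof that the cited reference contains, and there is nothing further to compare.
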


\subsection{Reconstruction of Binary Signals}
There are several compressed sensing approaches for the reconstruction of nonnegative sparse signals from random measurements \cite{Stojnic+,DT,JK}. As binary vectors are particularly nonnegative, those approaches can readily be applied to binary vectors. Let us therefore shortly review one of the approaches for the recovery of nonnegative signals.

It has become evident that basis pursuit restricted to the positive orthant $$\R^N_+:=\left\{x=(x_i)_{i=1}^N\in \R^N: x_i \ge 0, i\in [N]\right\}$$ has a strong performance at recovering nonnegative-valued sparse signals $x_0$ from the measurements $y=Ax_0$. This is the following program:
\begin{align}\label{P+}
\min\|x\|_1 \quad \text{subject to} \quad Ax=y \quad \text{and} \quad x\in \R^N_+,
\end{align} 

%Wheres the results of Stojnic are based on some properties of the nullspace of the measurement matrix $A$, Donoho and Tanner presented are more geometric approach to the problem. Their argument relies on the fact that if $F$ is defined as the convex hull of the vectors $\set{e_i: i\in K}$, it holds: A certain nonnegative-valued signal $x_0$ having non-zero entries on a set $K$ is recovered by \eqref{P+} if and only if $AF$ is a $k$-face of the projected polytope $A\R^N_+$. They moreover computed the probability of such a face ''surviving'' the projection with a random matrix $A$ which fulfills some specific properties.
%
%Having introduced these notions, we can recall a specific result (Lemmas 2.2, 2.3) from \cite{DT} which we will also need later on. It claims the following: Let $F$ denote a $k$-face of the orthant $\R^N_+$ and let $A\in \R^{m,N}$, with $m<N$, be in general position and have an orthant symmetric nullspace. Then
%	\begin{align}\label{eqn:survivingFaces}
%	\prb{AF \text{ is a } k-\text{face of } A\R^N_+}=1-P_{N-m, N-k},
%	\end{align}
%	where
%	\begin{align}
%	P_{ij}=2^{-j+1}\sum_{l=0}^{i-1}{i-1\choose l}, \quad {i,j\in \N}.
%	\end{align}
%
%This result in particular gives the probability of success of the program \eqref{P+} recovering a certain $k$-sparse nonnegative-valued vector $x_0 \in \R^N$.

Even so the mentioned approach is applicable to binary signals, it is already known that there are methods which yield even stronger recovery guarantees for binary signals. The canonical approach is to use the following adaptation of basis pursuit, to which one typically refers to as \emph{box-constrained basis pursuit}:
\begin{align}\label{Pbin}
\min\|x\|_1 \quad \text{subject to} \quad Ax=y \quad \text{and} \quad x\in [0,1]^N.
\end{align}
In \cite{Stojnic_Binary, KKLP}. the following equivalent condition for the success of \eqref{Pbin} has been shown. The vector $\one_S$, $S\subset [N]$, is the unique solution of \eqref{Pbin} if and only if
\begin{align}
\ker(A)\cap N^+\cap H_S=\{0\},
\end{align}
where $\ker(A)$ denote the nullspace of $A$,
\begin{align}
N^+:=\left\{w\in \R^N: \sum_{i=1}^N w_i\le 0\right\} \quad \text{and} \quad H_S:=\left\{w\in \R^N: w_i\le 0\text{ for } i \in S \text{ and } w_i\ge 0 \text{ for } i \notin S\right\}.
\end{align}

Least-squares on the other hand is an algorithm which is usually not well-applicable to sparse vectors since the $\ell_2$-norm does not promote sparsity. In \cite{FlKei}, however, it has been shown that least-squares with box-constraints works comparably well for binary-valued sparse signals, when reconstructing from biased sub-Gaussian measurements, even, if the measurements are contaminated with noise. \emph{Least-squares with box-constraints} is the following program
\begin{align}\label{LSBin}
\min\|Ax-y\|_2 \quad \text{subject to} \quad x\in [0,1]^N.
\end{align}
Note, that this fact has some practical impact. On the one hand least-squares is less complex and on the other hand it ensures a priori robustness in case of noisy measurements.

\subsection{Biased Random Matrices}
In compressed sensing the measurement matrix is often assumed to be (sub-) Gaussian, meaning that each entry of the measurement matrix $A\in\R^{M,N}$ is an independent copy of some (sub-) Gaussian. More precisely, we call  $A\in\R^{M,N}$ \emph{Gaussian}, if its entries are independently drawn from a renormalized normal distribution, i.e.,
\begin{align*}
A=\frac{1}{\sqrt{M}}\left[a_{i,j}\right]_{i,j=1}^{M,N} \qquad \text{with} \qquad a_{i,j}\sim \mathcal{N}(0,1).
\end{align*} 
A more general type of measurement matrices are sub-Gaussians, whose entries follow a sub-Gaussian distribution:
\begin{definition}
Let $(\Omega, \Sigma, \mathbb{P})$ be a probability space. Further let $X:\Omega\to \R$ be a random variable. The \emph{sub-Gaussian} or \emph{Orclitz-2}-norm of $X$ is given by
\begin{align}
\|X\|_{\Psi_2} = \sup_{p\ge 1} p^{-\frac{1}{2}}\E\left(|X|^p\right)^{\frac{1}{p}}.
\end{align}	
We call $X$ \emph{sub-Gaussian} if $\|X\|_{\Psi_2}<\infty$.
\end{definition}
A particular example for a sub-Gaussian matrix is a Rademacher matrix.  $A\in\R^{M,N}$ is called \emph{Rademacher matrix}, if its entries follow a Rademacher distribution, i.e., are chosen to be $-1$ or $1$ with equal probability:
\begin{align}
A=\frac{1}{\sqrt{M}}\left[a_{i,j}\right]_{i,j=1}^{M,N} \qquad \text{with} \qquad \mathbb{P}(a_{i,j}=1)=\mathbb{P}(a_{i,j}=-1)=\frac{1}{2}.
\end{align} 
Rademacher variables $X$ are indeed sub-Gaussians with norm $\|X\|_{\Psi_2}=1$, because $|X|=1$. Further, note that sub-Gaussians are sometimes also defined by assuming that the rows are independent random vectors, which fulfill some specific properties such as sub-Gaussian marginals (cf. \cite{Tropp}). This basically is some generalization of the definition used in the underlying paper.

Sub-Gaussian matrices are often considered to model the measurement process. They have the specific property to be centered, which means that the expected value of each entry is $0$. However, it was shown in a recent work \cite{FlKei} that non-centered matrices have some advantage for the recovery of binary signals. Moreover, in \cite{JK}, a similar phenomenon was observed for the recovery of non-negative signals by \eqref{P+}. To be  more precise the following \emph{biased random matrices} have been considered in \cite{FlKei}:

\begin{align}\label{eqn:biased}
A=\mu\one+D,
\end{align}
where  $\mu\ge 0$ is a freely chosen parameter that controls the expected value of the entries, $\one \in \R^{M,N}$ is the matrix having only entries equal to one, and $D\in \R^{M,N}$ is assumed to be centered and having sub-Gaussian entries whose expected value is $0$.% $d_{i,j}$ with 
%\begin{align}
%\E(d_{i,j})=0 \quad\text{and} \quad\E\left(d_{i,j}^2\right)= \sigma^2.
%\end{align} 

Roughly speaking the following was proven for the recovery of binary, sparse signals from biased random measurements:

\begin{theorem}[Simplified Version of Theorem III.2 and III.8 of \cite{FlKei}]\label{thm:BiasedRand}
	Let $x_0\in \{0,1\}^N$ be a binary vector, and $A\in \R^{M,N}$ be a biased random matrix of the form \eqref{eqn:biased} with $\mu >0$, and $x_0\in \{0,1\}^N$ a $s$-sparse binary vector.
		\renewcommand{\labelenumi}{\roman{enumi})}
	\begin{enumerate}
		\item If $M$ is slightly larger than $N/2$, $x_0$ will be the unique solution of \eqref{Pbin} with high probability.
		\item Under the assumption 
		\begin{align}
		M\gtrsim\max\left(\frac{R^2}{\mu^2},\min(s,N-s)\right)\log(N),
		\end{align} 
		%$x_0$ will be the unique solution of \eqref{Pbin} and of \eqref{LSBin} with high probability. In fact,
		 the solution $x_*$ of \eqref{LSBin} for $y=Ax_0+n$ with $n\in \R^M$ and $\|n\|_2\le \eta$ obeys with high probability
		\begin{align}
		\|x_0- x_*\|_2\le \sqrt{\frac{\left(\frac{\sigma^2}{\mu^2}+32\min(s,N-s)\right)}{m\sigma^2}}\cdot \eta,
		\end{align}
		where $\sigma$ is the variance of the entries of $A$. Particularly, in the case of noiseless measurements, i.e., $\eta=0$, $x_0$ is the unique solution of \eqref{Pbin} and of \eqref{LSBin} with high probability.
	\end{enumerate}
\end{theorem}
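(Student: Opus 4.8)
The plan is to treat the two parts separately, in each case reducing the claim to the statement that the random map $A=\mu\one+D$ behaves well on an explicit cone determined by the sign pattern $S=\supp(x_0)$.

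For part i) I would argue through the null-space characterization recalled above: writing $x_0=\one_S$ with $S=\supp(x_0)$, recovery by \eqref{Pbin} is equivalent to $\ker(A)\cap N^+\cap H_S=\{0\}$. The biased structure gives a convenient reformulation, since $w\in\ker(A)$ iff $Dw=-\mu\bar w\,\one_M$, where $\bar w:=\sum_{i=1}^N w_i$ and $\one_M\in\R^M$ is the all-ones vector; on $N^+$ one has $\bar w\le 0$, so the centered part $D$ would have to reproduce a prescribed nonnegative multiple of the deterministic direction $\one_M$. I would then bound the probability that the random kernel of $A$ meets the cone $N^+\cap H_S$ by an escape-through-the-mesh / statistical-dimension estimate, valid for centered sub-Gaussian $D$. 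The decisive geometric fact is that $H_S$ is a full orthant, so $N^+\cap H_S$ has statistical dimension $\approx N/2$ \emph{independently of} $s$; the bias is what aligns the half-space constraint $N^+$ with the fixed direction $\one_M$ that $D$ cannot cheaply cancel. Since the kernel has dimension $N-M$, avoidance holds once $(N-M)+N/2\lesssim N$, i.e. $M\gtrsim N/2$.

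For part ii) I would first carry out the standard reduction. As $x_0\in\{0,1\}^N\subset[0,1]^N$ is feasible for \eqref{LSBin} and $x_*$ is optimal, $\|Ax_*-y\|_2\le\|Ax_0-y\|_2=\|n\|_2\le\eta$, so the error $h:=x_*-x_0$ satisfies $\|Ah\|_2\le 2\eta$. The binary and box constraints force $h\in H_S$: for $i\in S$ one has $h_i=x_*(i)-1\le 0$, and for $i\notin S$ one has $h_i=x_*(i)\ge 0$. It therefore suffices to prove a restricted lower bound $\|Ah\|_2\ge\beta\|h\|_2$ uniformly for $h\in H_S$, since dividing $\|Ah\|_2\le 2\eta$ by $\beta$ then yields the error estimate, and in the noiseless case $\eta=0$ forces $h=0$. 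To get the lower bound I would expand $\|Ah\|_2^2=\mu^2 M\bar h^2+2\mu\bar h\langle\one_M,Dh\rangle+\|Dh\|_2^2$ and use the symmetry of the binary model: replacing $x_0$ by $\one-x_0$ interchanges $S$ with its complement, so without loss of generality $s\le N/2$, which is the source of the factor $\min(s,N-s)$. The bias term $\mu^2 M\bar h^2$ controls the all-ones (mean) direction, while a restricted-isometry-type estimate for the centered sub-Gaussian $D$ gives $\|Dh\|_2\gtrsim\|h\|_2$ on the complementary directions once $M\gtrsim\min(s,N-s)\log N$; the extra requirement $M\gtrsim(R^2/\mu^2)\log N$ enters to control $\|D\|$, hence the cross term, with $R$ the sub-Gaussian parameter of the entries. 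Assembling these bounds and tracking the variance $\sigma^2$ through the normalisation produces $\beta$ and the stated factor $\sqrt{(\sigma^2/\mu^2+32\min(s,N-s))/(M\sigma^2)}$.

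The main obstacle, in both parts, is that the relevant cones $H_S$ and $N^+\cap H_S$ are \emph{not} cones of sparse vectors, so off-the-shelf sparse restricted-isometry estimates do not apply directly; one must instead show that the box/sign geometry has small effective dimension ($\approx N/2$ for part i), $\min(s,N-s)$ for part ii)) and couple this with the deterministic bias direction. Controlling the cross term $2\mu\bar h\langle\one_M,Dh\rangle$ \emph{uniformly} over the whole cone, rather than for a single fixed $h$, together with transferring the Gaussian-width and escape-through-the-mesh machinery from Gaussian to biased sub-Gaussian matrices, is where I expect the real work to concentrate.
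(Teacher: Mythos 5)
You should first note that the paper does not prove this theorem at all: it is quoted from \cite{FlKei}, and the machinery the paper attributes to (and itself reuses from) \cite{FlKei} is quite different from yours --- an explicit dual certificate $\nu=\rho\one+D\beta_0-M^{-1}\sprod{D\beta_0,\one}\one$ whose coordinates are checked one at a time with Hoeffding/Hanson--Wright bounds and a union bound, converted into the noise estimate by Proposition \ref{prop:noise} (the $2r\eta/t$ bound), while the $N/2$ statement of part i) rests on general position and orthant symmetry of the kernel, i.e.\ a Wendel-type counting argument (see Section \ref{sec:UpBound} of the paper). Your plan --- escape through the mesh for i), a restricted lower bound over the cone $H_S$ for ii) --- is a genuinely different route, but as written it has two concrete gaps.

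Part i): Gordon's escape theorem and the statistical-dimension/kinematic machinery are Gaussian-specific, whereas the theorem assumes only sub-Gaussian entries. Small-ball substitutes that do cover sub-Gaussian ensembles lose the absolute constant, which is fatal here, since the entire content of i) is the sharp threshold ``slightly larger than $N/2$''; recovering that constant beyond the Gaussian case requires heavy universality results, not an off-the-shelf estimate, and the orthant-symmetry argument of \cite{FlKei} is precisely what achieves it without Gaussianity. Moreover, your geometric claim is false: the statistical dimension of $N^+\cap H_S$ is \emph{not} $\approx N/2$ independently of $s$. For $h\in H_S$ one has $\sum_i h_i=\|h_{S^c}\|_1-\|h_S\|_1$, so $N^+\cap H_S\subseteq\{h:\|h_{S^c}\|_1\le\|h_S\|_1\le\sqrt{s}\,\|h\|_2\}$, a cone of Gaussian width squared $\lesssim s\log(N/s)$, which even degenerates to $\{0\}$ when $s=0$. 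Only the one-sided bound (statistical dimension of $N^+\cap H_S$ at most that of the orthant $H_S$, namely $N/2$) is correct; that happens to be the direction sufficiency needs, so this slip is not fatal, but the geometry is not what you describe.

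Part ii): the reduction to $h=x_*-x_0\in H_S$ with $\|Ah\|_2\le 2\eta$ is fine, but the step ``restricted-isometry-type estimate $\ldots$ once $M\gtrsim\min(s,N-s)\log N$'' is exactly where the argument fails as stated. A uniform bound $\|Dh\|_2\gtrsim\|h\|_2$ on the ``complementary directions'' inside $H_S$ cannot hold with so few measurements, because $H_S$ is a full orthant of effective dimension $N/2$; and the naive uniform cross-term bound $|\sprod{\one,Dh}|\le\|D^*\one\|_2\|h\|_2\approx R\sqrt{MN}\,\|h\|_2$ would force $M\gtrsim N$. The missing idea is the localization that makes the bias usable: for unit $h\in H_S$, writing $\bar h=\sum_i h_i=\|h\|_1-2\|h_S\|_1$ and working on the event $\|D^*\one\|_\infty\lesssim R\sqrt{M\log N}$ (this is where $M\gtrsim(R^2/\mu^2)\log N$ enters), the set of $h$ for which the component of $Ah$ along $\one$ is not already $\gtrsim\sigma\sqrt{M}$ satisfies $\|h\|_1\lesssim\sqrt{s}+\sigma/\mu$, hence is a cone of width squared $\lesssim(s+\sigma^2/\mu^2)\log N$, and only on that small cone does one need a lower bound for the part of $Dh$ orthogonal to $\one$. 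You correctly flag the cross term as ``where the real work concentrates,'' but this decisive identity and case split is absent, and without it your plan does not produce the stated sample complexity, let alone the constant $32$. This comparison also explains what the dual-certificate route buys: it needs only $N$ scalar deviation bounds plus a union bound, no uniform bound over a cone, which is why \cite{FlKei} and the present paper can push the same argument through even for structured circulant measurements, where chaining over cones would be substantially harder.
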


\subsection{Main Result}\label{sec:IntMainRes}
As above-mentioned there are several applications where we do not have full freedom to design the measurement matrix. It is therefore of some importance to study structured random matrices. In applications such as radar or wireless communications the measurement process can be represented using partial random circulant matrices and partial random Toeplitz matrices  (cf. \cite{HBRN},\cite{JR}). In those applications binary sparse signals appear also in a natural way. The main goal of this work is therefore to prove comparable results as in \cite{FlKei} (see Theorem \ref{thm:BiasedRand}), for such matrices. Before stating our main results, let us introduce the considered matrices. 

For $b=(b_0,b_1,\dots,b_{N-1})\in \R^N$ we define the associated \emph{circulant matrix} $\Phi=\Phi(b)\in \R^{N,N}$ by setting
\begin{align}
\Phi_{k,j}=b_{j-k\imod N } \qquad k,j\in [N].
\end{align} 
Similarly, for a vector $c=(c_{-N+1},c_{-N+2},\dots, c_{N-1})\in \R^{2N-1}$ the associated \emph{Toeplitz matrix} $T=T(c)\in \R^{N,N}$ has entries 
\begin{align}
T_{k,j}=c_{j-k} \qquad k,j\in [N].
\end{align}
For an arbitrary subset $\Theta\subset[N]$ of  cardinality $M<N$, we let the \emph{partial circulant matrix} $\Phi_{\Theta}=\Phi_{\Theta}(b)\in \R^{M,N}$, and the \emph{partial Toeplitz matrix} $T_{\Theta}$, respectively, be the submatrix of $\Phi$, and $T$ respectively, consisting of the rows indexed by $\Theta$. 
In \cite{StrctRau} one can find a comprehensive overview of compressed sensing with structured random matrices. It is particularly shown that partial circulant matrices with Rademacher input vector $b$ work comparable well for the classical compressed sensing task as completely random sub-Gaussian matrices.

For our purpose we choose the vectors $b$ and $c$ to be sub-Gaussian sequences. Hence, the matrices $\Phi_{\Theta}$ and $T_{\Theta}$ are centered. Similarly to the results in \cite{FlKei}, we consider biased partial random matrices given by

\begin{align}\label{eqn:BiasedCirc}
A=A(b)=\mu \mathds{1}+\Phi_{\Theta}(b), 
\end{align} 
or
\begin{align}\label{eqn:BiasedToep}
A=A(c)=\mu \mathds{1}+T_{\Theta}(c).
\end{align} 
Here, the parameter $\mu\ge0$ controls the expected value of the entries of $A$ and $\one\in \R^{M,N}$ is the matrix having all entries equal to one.

The main purpose of this paper is to prove the symmetric phase transition observed in \cite{FlKei}, for biased partial random matrices given by Equation \eqref{eqn:BiasedCirc}, meaning that we need as many measurements to recover sparse signals as we need to recover dense signals. The main result of this paper is the following theorem:

%\begin{theorem}\label{thm:phaseTrans}
%	Let $b,c\in \R^N$  sub-Gaussian vectors with $\erw{b_{i}^2}=\erw{c_{i}^2}=\sigma^2$, for $i\in [N]$, both with sub-Gaussian norm $R$. Further let $A\in \R^{M,N}$ be a biased measurement matrix of the form \eqref{eqn:BiasedCirc} for some $\mu>0$.  Fix some tolerance $\varepsilon>0$. A binary signal $x_0\in\{0,1\}^N$ with $\|x_0\|_0=s$  is the unique solution of \eqref{Pbin} and \eqref{LSBin} with probability larger than $1-\varepsilon$, provided
%	\begin{align}\label{eqn1:main}
%	M \gtrsim \max\left(\frac{R^2}{\mu^2}, \abs{\min(s, N-s)} \frac{2R^4}{\sigma^4}\right)\log \left(\frac{N}{\varepsilon}\right)
%	\end{align}
%	with a constant depending only on $\sigma$ and $\mu^{-1}$. 
%	
%	Under the additional assumption $M \gtrsim \left(\frac{R}{\sigma}\right)^{4/3}\log(\varepsilon^{-1})$ the solution $x_*$ of \eqref{LSBin} for $b= Ax_0 + n$ with $\norm{n}_2 \leq \eta$ \sg{for some $\eta >0$} obeys
%	\begin{align}\label{eqn:noisyMeas}
%	\norm{x_0-x_*}_2 \leq \sqrt{\frac{9\left(\frac{16\sigma^2}{\mu^2}+ \min(s, N-s)\right)}{M \sigma^2}}\cdot \eta.
%	\end{align}
%\end{theorem}

\begin{theorem}\label{thm:phaseTrans} Let $\mu>0$ and fix some tolerance $\varepsilon>0$. Let $A\in \R^{M,N}$ be a biased measurement matrix 
		\renewcommand{\labelenumi}{\roman{enumi})}
		\begin{enumerate}
		\item  of the form \eqref{eqn:BiasedCirc}, where $b=[b_i]_{i=1}^N\in \R^N$ is a sub-Gaussian vector with $\erw{b_{i}}=0$, $\erw{b_{i}^2}=\sigma^2$, for $i\in [N]$, and sub-Gaussian norm $R$. Or
		\item  of the form \eqref{eqn:BiasedToep}, where $c=[c_i]_{i=-N+1}^{N-1}\in \R^{2N-1}$ is a sub-Gaussian vector with $\erw{c_{i}}=0$, $\erw{c_{i}^2}=\sigma^2$, for $i\in \{-N+1,\dots,N-1\}$, and sub-Gaussian norm $R$.
	\end{enumerate}
    A binary signal $x_0\in\{0,1\}^N$ with $\|x_0\|_0=s$  is the unique solution of \eqref{Pbin} and \eqref{LSBin} for $y=Ax_0$ with  probability larger than $1-\varepsilon$, provided
	\begin{align}\label{eqn1:main}
	M \gtrsim \max\left(\frac{R^2}{\mu^2}, \abs{\min(s, N-s)} \frac{2R^4}{\sigma^4}\right)\log \left(\frac{N}{\varepsilon}\right)
	\end{align}
	with a constant depending only on $\sigma$ and $\mu^{-1}$.

	\begin{enumerate}
		\item[iii)] 	Under the additional assumption $M \gtrsim \left(\frac{R}{\sigma}\right)^{4/3}\log(\varepsilon^{-1})$ the solution $x_*$ of \eqref{LSBin} for $y= Ax_0 + n$ with $\norm{n}_2 \leq \eta$ for some $\eta >0$ obeys
		\begin{align}\label{eqn:noisyMeas}
		\norm{x_0-x_*}_2 \leq \sqrt{\frac{9\left(\frac{16\sigma^2}{\mu^2}+ \min(s, N-s)\right)}{M \sigma^2}}\cdot \eta.
		\end{align}
	\end{enumerate}
\end{theorem}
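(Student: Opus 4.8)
The plan is to reduce all three claims to a single \emph{restricted lower bound} of the form $\norm{Aw}_2 \ge c\,\norm{w}_2$, valid with probability $\ge 1-\eps$ for every $w$ in the feasible cone at $x_0$. Writing $S=\supp(x_0)$ so that $x_0=\mathds{1}_S$, a point $x\in[0,1]^N$ with $x\neq x_0$ is feasible for \eqref{LSBin} exactly when $w=x-x_0$ lies in the tangent cone $H_S$ of the box at the binary vertex $x_0$; hence in the noiseless case $x_0$ is the unique minimiser of \eqref{LSBin} iff $\ker(A)\cap H_S=\{0\}$, and by the cited characterisation the weaker condition $\ker(A)\cap N^+\cap H_S=\{0\}$ governs \eqref{Pbin}. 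A lower bound on $H_S$ implies both. For part iii), since $\norm{Ax_*-y}_2\le\norm{Ax_0-y}_2=\norm{n}_2\le\eta$, the difference $w=x_*-x_0\in H_S$ satisfies $\norm{Aw}_2\le 2\eta$, so the same bound yields $\norm{x_0-x_*}_2\le 2\eta/c$, and the constant in \eqref{eqn:noisyMeas} is read off from $c$. Finally I would dispose of the dense regime $s>N/2$ by the binary symmetry $x_0\mapsto\mathds{1}-x_0$ of \cite{FlKei}: the complement is $(N-s)$-sparse with feasible cone $H_{S^c}$, so it suffices to prove the bound with $s$ replaced by $\min(s,N-s)$, which I henceforth take to equal $s$.

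Next I would split $Aw$ along the bias direction. With $\alpha=\mathds{1}^{\T}w=\sum_{i\notin S}w_i-\sum_{i\in S}\abs{w_i}$ and the orthogonal projection $P^\perp=\Id-\tfrac1M\mathds{1}_M\mathds{1}_M^{\T}$ off the all-ones vector, one has $\norm{Aw}_2^2=M(\mu\alpha+\beta)^2+\norm{P^\perp\Phi_{\Theta}w}_2^2$ with $\beta=\tfrac1M\mathds{1}_M^{\T}\Phi_\Theta w$. Because $w\in H_S$ has a fixed sign pattern, $\alpha$ measures exactly the imbalance between the mass off $S$ and on $S$, which motivates the case split: (A) if $\norm{w_{S^c}}_1\ge 2\norm{w_S}_1$ then $\alpha\ge\tfrac13\norm{w}_1\ge\tfrac13\norm{w}_2$ and the bias term $M\mu^2\alpha^2$ must carry the bound; (B) otherwise $\norm{w}_1\le 3\norm{w_S}_1\le 3\sqrt{s}\,\norm{w}_2$, so $w$ is effectively $s$-sparse and the fluctuation term $\norm{P^\perp\Phi_\Theta w}_2^2$ must carry it.

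For case (A) I would control the cross term by showing that the column sums $\mathds{1}_M^{\T}\Phi_\Theta$ concentrate: each is a sum of $M$ centred sub-Gaussian entries of $b$, so after a union bound over the $N$ columns $\abs{\beta}\lesssim R\sqrt{\log(N/\eps)/M}\,\norm{w}_1$, and the hypothesis $M\gtrsim (R^2/\mu^2)\log(N/\eps)$ makes $\abs{\beta}\le\tfrac12\mu\abs{\alpha}$, whence $M(\mu\alpha+\beta)^2\gtrsim M\mu^2\norm{w}_2^2$. For case (B) I would invoke a restricted lower bound for the centred partial random circulant matrix $\Phi_\Theta$ on effectively $s$-sparse vectors, $\norm{\Phi_\Theta w}_2^2\gtrsim M\sigma^2\norm{w}_2^2$; the projection $P^\perp$ removes only the rank-one mean component, whose size is again controlled by the column-sum concentration, under $M\gtrsim s(R^4/\sigma^4)\log(N/\eps)$. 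Here the Gershgorin theorem together with the Hilbert--Schmidt estimate \eqref{eqn:EstHS} is the natural vehicle: one writes $\tfrac1M\Phi_\Theta^{\ast}\Phi_\Theta$ on the relevant index set, shows its diagonal concentrates near $\sigma^2$ (using the auxiliary hypothesis $M\gtrsim(R/\sigma)^{4/3}\log(\eps^{-1})$) while the off-diagonal autocorrelation entries stay small, and reads off a positive smallest eigenvalue, with \eqref{eqn:EstHS} supplying the operator-norm control in the concentration step. Combining the two cases gives $\norm{Aw}_2^2\gtrsim M\sigma^2\,\norm{w}_2^2/\bigl(16\sigma^2/\mu^2+\min(s,N-s)\bigr)$ uniformly on $H_S$, which simultaneously yields $c$, the additive form of the denominator in \eqref{eqn:noisyMeas}, and the probability $1-\eps$.

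The hard part will be case (B): unlike for matrices with independent entries, the columns of $\Phi_\Theta$ are cyclic shifts of a single random vector, so the Gram entries are dependent autocorrelations and a naive Gershgorin/coherence estimate only yields the quadratic regime $M\gtrsim s^2$. Obtaining the \emph{linear} dependence on $\min(s,N-s)$ therefore forces me to use the sharper concentration available for structured random matrices (a suprema-of-chaos / $\gamma_2$ bound as in \cite{StrctRau}) rather than Gershgorin alone for the sparse estimate. The remaining delicacy is to glue the sparse-regime bound to the bias-dominated bound through the single shared scalar $\alpha$, ensuring the rank-one mean removed by $P^\perp$ never destroys either estimate, and to keep the constants compatible so that the two conditions on $M$ merge exactly into \eqref{eqn1:main}.
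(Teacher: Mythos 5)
Your architecture (reduce parts i)--iii) to a uniform lower bound $\norm{Aw}_2\ge c\norm{w}_2$ on the cone $H_S$, then split along the bias direction into the bias-dominated case (A) and the effectively-sparse case (B)) is coherent and genuinely different from the paper, which proves no uniform estimate at all: it constructs an explicit dual certificate $\nu=\rho\one+\Phi\beta_0-M^{-1}\sprod{\Phi\beta_0,\one}\one$ with $\rho=-\sigma^2/(4\mu)$ and verifies $A^*\nu\in H_S^t$, $t=M\sigma^2/16$, coordinate by coordinate. However, your plan has a genuine gap exactly where you flagged ``the hard part,'' and it is not a removable technicality: case (B) needs $\norm{\Phi_\Theta w}_2^2\gtrsim M\sigma^2\norm{w}_2^2$ to hold \emph{simultaneously} for all effectively $s$-sparse $w$ in the cone with only $M\gtrsim s\,(R^4/\sigma^4)\log(N/\varepsilon)$ rows. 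Since every row of $\Phi_\Theta$ is a shift of the single random vector $b$, the rows are not independent, so small-ball/Mendelson-type arguments for independent-row matrices do not apply; the only known uniform lower bounds for partial random circulant matrices are the suprema-of-chaos RIP bounds you propose to invoke (cf. \cite{StrctRau}), and these require $M\gtrsim s\log^2 s\,\log^2 N$ measurements. Removing those extra logarithmic factors is a well-known open problem. Consequently your route can at best prove a weaker theorem, with $\log^2 s\,\log^2 N$ in place of the single $\log(N/\varepsilon)$ of \eqref{eqn1:main}. The paper's dual-certificate argument is precisely the device that avoids uniformity: once the certificate exists, $\ker(A)\cap H_S^0=\{0\}$ follows from a deterministic sign argument, and the probabilistic work collapses to concentration of $3N$ scalar quantities $X_1(i),X_2(i),X_3(i)$, each controlled by Hoeffding or Hanson--Wright, plus a union bound over $i\in[N]$ --- which is what yields the single logarithm and a constant depending only on $\sigma$ and $\mu^{-1}$.

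Two secondary mismatches. First, your constant in iii) would not ``read off'' as \eqref{eqn:noisyMeas}: if your case-(B) bound held, combining the two cases gives $c^2\gtrsim M\min(\mu^2,\sigma^2)$ and hence an error bound $\norm{x_*-x_0}_2\lesssim\eta/(\sqrt{M}\min(\mu,\sigma))$ that is independent of $s$; the $\sqrt{\min(s,N-s)}$ in the theorem instead arises in the paper from Proposition \ref{prop:noise}, i.e. from the ratio $\norm{\nu}_2/t$, where $\norm{\nu}_2^2\le M\sigma^2\bigl(\sigma^2/(16\mu^2)+2s\bigr)$ is established by one more Hanson--Wright application (this is where the hypothesis $M\gtrsim(R/\sigma)^{4/3}\log(\varepsilon^{-1})$ enters) and $t=M\sigma^2/16$. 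Second, you never address part ii): the paper handles Toeplitz matrices by embedding $T$ as the upper-left block of a circulant matrix in $\R^{2N-1,2N-1}$ and rerunning the certificate argument on the enlarged problem; your framework would need the analogous embedding of the cone, which, while routine, must be stated since the restricted bound is being asserted for a different matrix.
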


\section{Proof of Theorem \ref{thm:phaseTrans}}

We will prove Theorem \ref{thm:phaseTrans} by deriving a so-called \emph{dual certificate} \cite{Fuchs, Gross,TroppDual}, that is a vector $\nu\in \R^M$ having a small $\ell_2$-norm and fulfilling $A^*\nu\in H_S^t$, for some $t\ge0$, where
\begin{align*}
H_S^t:=\left\{w\in \R^N: w_i\le -t\text{ for } i \in S \text{ and } w_i\ge t \text{ for } i \notin S\right\}.
\end{align*}

To justify that this will help to prove the theorem, let us recall some results from \cite{FlKei}.

\begin{proposition}[Propositions II.3 and III.1 of \cite{FlKei}]
		\renewcommand{\labelenumi}{\roman{enumi})}
Let $A\in \R^{M,N}$ and $S\subset[N]$. Then the following statements are equivalent:
	\renewcommand{\labelenumi}{\roman{enumi})}
\begin{enumerate}
	\item $\one_S$ and $\one-\one_S=\one_{S^C}$ are the unique solutions of \eqref{Pbin}.
	\item $\ker(A)\cap H_S^0=\{0\}$.
	\item $\{x\in [0,1]^N: Ax=A\one_S\}=\one_S$.
	\item There is $\nu\in \R^M$ such that $A^*\nu\in H_S^t$ for some $t>0$.
\end{enumerate}
\end{proposition}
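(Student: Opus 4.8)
The plan is to establish the four equivalences through the cycle (i) $\iff$ (iii) $\iff$ (ii) $\iff$ (iv), concentrating the real content in the final, dual link and treating the first two as elementary reformulations of the feasible set of \eqref{Pbin} with measurement $y = A\one_S$.

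First I would prove (ii) $\iff$ (iii). Writing a feasible point of \eqref{Pbin} as $x = \one_S + w$, the equality constraint $Ax = A\one_S$ is equivalent to $w \in \ker(A)$, while $x \in [0,1]^N$ forces $w_i \in [-1,0]$ for $i \in S$ and $w_i \in [0,1]$ for $i \notin S$; the sign information alone says precisely $w \in H_S^0$. Hence (iii), i.e. $\{x \in [0,1]^N : Ax = A\one_S\} = \{\one_S\}$, is the statement that the only $w \in \ker(A) \cap H_S^0$ with $\|w\|_\infty \le 1$ is $w = 0$. Because $\ker(A)$ is a subspace and $H_S^0$ is a cone, any nonzero element of $\ker(A) \cap H_S^0$ can be scaled into the unit box, so the bounded condition is equivalent to $\ker(A) \cap H_S^0 = \{0\}$, which is (ii).

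Next, for (i) $\iff$ (iii) I would exploit the binary symmetry. The involution $x \mapsto \one - x$ maps $[0,1]^N$ onto itself, carries $P := \{x \in [0,1]^N : Ax = A\one_S\}$ bijectively onto $P' := \{x \in [0,1]^N : Ax = A\one_{S^C}\}$ (using $A\one_{S^C} = A\one - A\one_S$), and turns $\|x\|_1 = \sum_i x_i$ into $N - \|x\|_1$. Thus $\one_{S^C}$ is the unique minimizer of $\|\cdot\|_1$ over $P'$ if and only if $\one_S$ is the unique maximizer of $\|\cdot\|_1$ over $P$, and (i) asserts that the linear functional $\sum_i x_i$ attains both its minimum and its maximum over the convex set $P$ at the single point $\one_S$. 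Since a linear functional can do this only on a singleton, (i) is equivalent to $P = \{\one_S\}$, i.e. to (iii).

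The crux is (ii) $\iff$ (iv), where I would reduce to a theorem of the alternative. Let $D$ be the diagonal sign matrix with $D_{ii} = -1$ for $i \in S$ and $D_{ii} = +1$ for $i \notin S$, and put $B = AD$; note $D^2 = \Id$ and $D^* = D$. Substituting $w = Dy$ gives $H_S^0 = \{Dy : y \ge 0\}$, so (ii) becomes the implication $By = 0,\ y \ge 0 \implies y = 0$. On the other side, $\bigcup_{t>0} H_S^t = \{w : D_{ii} w_i > 0 \text{ for all } i\}$, so (iv) becomes the existence of $\nu$ with $B^*\nu > 0$ componentwise. These are exactly the two alternatives of Gordan's theorem (applied to $B^*$), precisely one of which holds, and therefore (ii) $\iff$ (iv). I expect this to be the main obstacle: its delicate point is the strict inequality ($t > 0$, i.e. $A^*\nu$ in the interior of $H_S^0$), which is what forces the use of Gordan's theorem rather than a Farkas lemma with non-strict inequalities. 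Equivalently, one could argue geometrically by strictly separating the subspace $\ran(A^*)$ from the cone $H_S^0$, using that $H_S^0$ is self-dual and $\ker(A) = \ran(A^*)^\perp$; the disjointness required for strict separation is exactly assumption (ii).
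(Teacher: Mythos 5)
Your proof is correct, but there is nothing in this paper to compare it against: the proposition is imported wholesale from \cite{FlKei} (Propositions II.3 and III.1) and is stated here without proof. Judged on its own, your chain (i) $\iff$ (iii) $\iff$ (ii) $\iff$ (iv) is sound. The cone-scaling argument correctly identifies (iii) with (ii): writing feasible points as $x=\one_S+w$, the feasible set is $\one_S+\left(\ker(A)\cap H_S^0\cap\{w:\|w\|_\infty\le 1\}\right)$, and positive homogeneity of $\ker(A)\cap H_S^0$ makes the box constraint irrelevant. The involution $x\mapsto\one-x$ for (i) $\iff$ (iii) is also right, the key observation being that a linear functional whose unique minimizer and unique maximizer over a convex set coincide forces that set to be a singleton; this is exactly where the symmetry between $s$-sparse and $(N-s)$-sparse signals enters. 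Finally, the reduction of (ii) $\iff$ (iv) to Gordan's theorem via the sign matrix $D$ (so that $H_S^0=D\R^N_+$ and membership of $A^*\nu$ in some $H_S^t$ with $t>0$ becomes strict componentwise positivity of $(AD)^*\nu$) is correct, and your remark about strictness is well placed: a Farkas-type alternative with weak inequalities would only yield $t\ge 0$, which is useless here, since the whole point of the dual certificate in this paper is a strictly positive gap (the proof of Theorem \ref{thm:phaseTrans} constructs $\nu$ with $A^*\nu\in H_S^t$ for $t=M\sigma^2/16$). So your argument supplies, in a self-contained way, precisely the bridge between conditions that the present paper takes on faith from \cite{FlKei}; whether it matches the route taken there cannot be checked from this document alone.
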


Hence, finding a dual certificate indeed ensures, that $\one_S$ is the unique solution of \eqref{Pbin}. However, the third equivalence even yields that there is no other solution of $Ax=A\one_S$ other than $\one_S$ in the box $[0,1]^N$. Thus, the minimization in \eqref{Pbin} is not crucial and we can run box-constrained least-squares \eqref{LSBin} instead.

The proof of Part \emph{iii)} of Theorem \ref{thm:phaseTrans} will further make use of the following result from \cite{FlKei}.
%Another result we aim to recall from \cite{FlKei} will explain Equation \eqref{eqn:noisyMeas}. 

\begin{proposition}\label{prop:noise}[Proposition III.4 of \cite{FlKei}]
Let $r,t, \eta>0$, $S\subset [N]$ and $A\in \R^{M,N}$. Suppose that there exists a dual certificate $\nu\in \R^M$ such that $A^*\nu \in H_S^t$ and $\norm{\nu}_2 \leq r$.

Let $x_0=\mathbb{1}_S\in \R^N$ be the binary signal supported on $S$, and $y=Ax_0 + n$ with $\norm{n}_2 \leq \eta$. Then the solution $x_*$ of the program \eqref{LSBin} obeys
\begin{align*}
\norm{x_*-x_0}_2 \leq \frac{2r}{t}\eta.
\end{align*}	
\end{proposition}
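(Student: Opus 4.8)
The plan is to carry out the standard dual-certificate stability argument, exploiting that both the minimizer $x_*$ and the ground truth $x_0$ are box-feasible, so that their difference is automatically sign-constrained in the way the certificate needs. First I would introduce the error vector $h := x_* - x_0$ and read off its sign pattern from feasibility: since $x_0 = \mathbb{1}_S$ and $x_* \in [0,1]^N$, for $i \in S$ we have $h_i = x_*(i) - 1 \le 0$, while for $i \notin S$ we have $h_i = x_*(i) \ge 0$. In other words $h$ lies in the cone $H_S^0$, which is exactly the sign structure that pairs favourably with a certificate in $H_S^t$.

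The heart of the argument is a two-sided estimate of $\sp{A^*\nu}{h}$. For the lower bound, since $A^*\nu \in H_S^t$ each coordinate satisfies $(A^*\nu)_i h_i \ge t\abs{h_i}$: on $S$ both factors are non-positive with $\abs{(A^*\nu)_i} \ge t$, and off $S$ both are non-negative with $(A^*\nu)_i \ge t$. Summing over $i$ gives $\sp{A^*\nu}{h} \ge t\norm{h}_1$. For the upper bound I would move $\nu$ across the adjoint, $\sp{A^*\nu}{h} = \sp{\nu}{Ah}$, and split $Ah = (Ax_* - y) + n$ using $y = Ax_0 + n$. Cauchy--Schwarz together with $\norm{\nu}_2 \le r$ then yields $\sp{\nu}{Ah} \le r\norm{Ax_* - y}_2 + r\eta$. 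The residual term is controlled by optimality: $x_0$ is feasible for \eqref{LSBin} with residual $\norm{Ax_0 - y}_2 = \norm{n}_2 \le \eta$, so minimality of $x_*$ forces $\norm{Ax_* - y}_2 \le \eta$, whence $\sp{\nu}{Ah} \le 2r\eta$.

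Combining the two bounds gives $t\norm{h}_1 \le 2r\eta$, and since $\norm{h}_2 \le \norm{h}_1$ for every vector, we conclude $\norm{x_* - x_0}_2 \le \frac{2r}{t}\eta$, as claimed. There is no deep obstacle here; the only points requiring care are the sign bookkeeping that converts membership of $h$ in $H_S^0$ and of $A^*\nu$ in $H_S^t$ into the coordinatewise inequality $(A^*\nu)_i h_i \ge t\abs{h_i}$, and the observation that box-feasibility of $x_0$ combined with optimality of $x_*$ is precisely what bounds the residual $\norm{Ax_* - y}_2$ by $\eta$. Everything else is Cauchy--Schwarz and the monotonicity $\norm{\,\cdot\,}_2 \le \norm{\,\cdot\,}_1$.
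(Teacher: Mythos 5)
Your proof is correct and follows essentially the same route as the cited source: this paper states the proposition without proof, importing it as Proposition III.4 of \cite{FlKei}, and the argument there is exactly your dual-certificate chain
$t\|h\|_1 \le \langle A^*\nu, h\rangle = \langle \nu, Ah\rangle \le \|\nu\|_2\left(\|Ax_*-y\|_2+\|n\|_2\right) \le 2r\eta$,
using box-feasibility of $x_0$ and optimality of $x_*$ to bound the residual, followed by $\|h\|_2 \le \|h\|_1$. All sign bookkeeping ($h \in H_S^0$ paired against $A^*\nu \in H_S^t$) is handled correctly, so there is nothing to fix.
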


Before constructing the dual certificate explicitly, let us recall a so-called Hoeffding-type inequality as well as Hanson-Wright inequality, which will be an important probabilistic tools for the proof of Theorem \ref{thm:phaseTrans}.

\begin{theorem}[Proposition 5.10 of \cite{VershyninRandomMatrices} and Theorem 1.1 of \cite{rudelson2013hanson}]\label{ProbTools}\
	\renewcommand{\labelenumi}{\roman{enumi})}
	\begin{enumerate}
	\item There exists a universal constant $C>0$ with the following property: If $X_1, \dots, X_M$ are independent sub-Gaussian random variables, then
	\begin{align*}
	\prb{ \bigg\vert \sum_{i=1}^M  X_i\bigg\vert \geq t} \leq e \cdot \exp\left(- \frac{Ct^2}{\gamma^2}\right),
	\end{align*}
	with $\gamma= \sum_{i=1}^M \norm{X_i}_{\psi_2^2}^2$.
	\item There exists a universal constant $C>0$ with the following property: Suppose that $X= (X_1, \dots, X_q) \in \R^{q}$ is a random vector with independent, sub-Gaussian entries. Let further $L$ be a fixed linear map from $\R^q$ to $\R^q$. Then we have
	\begin{align*}
	\prb{ \abs{ \sprod{X, LX} - \erw{\sprod{X,LX}}}>t} \leq 2\exp\left( -C \min \left(\frac{t^2}{R^4\norm{L}_{HS}^2},\frac{t}{R^2\norm{L}}\right) \right)
	\end{align*}
	with $R= \max_{\ell=1, \dots q} \norm{X_\ell}_{\psi_2}$.    
	\end{enumerate}
\end{theorem}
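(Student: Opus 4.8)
Both parts are classical and are proved by the \emph{exponential moment} (Chernoff--Cram\'er) method; the only input beyond Markov's inequality and independence is the standard equivalence between the Orlicz norm of the definition above and a sub-Gaussian bound on the moment generating function: there is an absolute constant $c$ such that $\erw{\exp(\lambda(X-\erw{X}))} \le \exp(c\lambda^2 \norm{X}_{\psi_2}^2)$ for all $\lambda\in\R$. My first step in either part is therefore to record this equivalence. Its forward direction follows by expanding the exponential into its moment series and inserting the bound $\erw{\abs{X}^p}^{1/p} \le \sqrt{p}\,\norm{X}_{\psi_2}$, which is immediate from $\norm{X}_{\psi_2} = \sup_{p\ge 1}p^{-1/2}\erw{\abs{X}^p}^{1/p}$.

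For part (i) I would bound the moment generating function of the sum $S=\sum_{i=1}^M X_i$ by factorising over the independent (and, as in the cited statement, centred) summands, $\erw{\exp(\lambda S)} = \prod_{i=1}^M \erw{\exp(\lambda X_i)} \le \exp\!\big(c\lambda^2 \sum_{i=1}^M \norm{X_i}_{\psi_2}^2\big)$, where $\gamma=\sum_i\norm{X_i}_{\psi_2}^2$ is the proxy variance. Markov's inequality gives $\prb{S\ge t}\le \exp(c\lambda^2\gamma-\lambda t)$, and optimising over $\lambda>0$ (taking $\lambda=t/(2c\gamma)$) yields $\prb{S\ge t}\le\exp(-c't^2/\gamma)$. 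Applying the same estimate to $-S$ and adding the two tails produces the two-sided bound, the leading factor being absorbed into the stated constant $e$; no further idea is needed.

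For part (ii) I would follow the Rudelson--Vershynin strategy and split the quadratic form along the diagonal of $L$:
\begin{align*}
\sprod{X, LX} - \erw{\sprod{X, LX}} = \sum_{i} L_{ii}\,(X_i^2 - \erw{X_i^2}) + \sum_{i \ne j} L_{ij} X_i X_j .
\end{align*}
The diagonal term is a sum of independent, centred, \emph{sub-exponential} variables (since $X_i$ sub-Gaussian makes $X_i^2$ sub-exponential with constant controlled by $R^2$), so a Bernstein inequality gives a tail of the mixed form $\exp(-c\min(t^2/(R^4\sum_i L_{ii}^2),\, t/(R^2\max_i\abs{L_{ii}})))$, dominated by the claimed bound because $\sum_i L_{ii}^2\le\norm{L}_{HS}^2$ and $\max_i\abs{L_{ii}}\le\norm{L}$. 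The off-diagonal term is the substantive part: I would estimate its moment generating function by first \emph{decoupling}, replacing the second factor $X_j$ by an independent copy $X_j'$ at the cost of an absolute constant, so that it suffices to control $\erw{\exp(\lambda\sum_{i,j}L_{ij}X_iX_j')}$. Conditioning on $X$, the variable $\sum_{i,j}L_{ij}X_iX_j'=\sum_j (L^\T X)_j X_j'$ is sub-Gaussian in $X'$ with proxy $\norm{L^\T X}_2^2$; taking the expectation over $X'$ leaves an expectation over $X$ that is handled by the same Gaussian-comparison idea, and for genuinely Gaussian entries one diagonalises the symmetric part of $L$ and reads off the MGF explicitly, producing exactly the two regimes $t^2/(R^4\norm{L}_{HS}^2)$ and $t/(R^2\norm{L})$ after the Chernoff optimisation. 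A union bound over the diagonal and off-diagonal tails and a relabelling of constants then give the stated inequality.

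The main obstacle is squarely the off-diagonal term in part (ii): the MGF of a bilinear form in independent sub-Gaussians does not factorise, and the two difficulties there — justifying the decoupling inequality and then transferring the explicit Gaussian MGF computation to general sub-Gaussian entries (a rotation-invariance argument available verbatim only in the Gaussian case) — are what force the delicate $\min(\cdot,\cdot)$ tail and account for essentially all of the work. Part (i) and the diagonal part of (ii), by contrast, are routine Chernoff and Bernstein estimates.
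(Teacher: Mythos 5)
Your outline is correct and coincides with the standard proofs in the sources this paper cites: the paper itself imports Theorem \ref{ProbTools} from \cite{VershyninRandomMatrices} and \cite{rudelson2013hanson} without proof, and your Chernoff/MGF argument for part (i) and your diagonal/off-diagonal split with Bernstein, decoupling, and Gaussian comparison for part (ii) are exactly the arguments given there (the only cosmetic slip being that after decoupling one diagonalises via the singular value decomposition of the off-diagonal part rather than its symmetric part). One further point in your favour: part (i) needs the $X_i$ to be centred, a hypothesis the paper's statement omits but you correctly supply, and your derived tail $\exp\left(-Ct^2/\sum_{i=1}^M\norm{X_i}_{\psi_2}^2\right)$ is the correct reading of the paper's garbled $\gamma^2$ and $\norm{\cdot}_{\psi_2^2}$ notation.
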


Now we are prepared to prove Theorem \ref{thm:phaseTrans}. Note that we will use the same dual certificate as for the proof of the main theorem of \cite{FlKei} and that the probabilistic tool will also be Hanson-Wright inequality. However, the proof is considerable more sophisticated.

%First note that Toeplitz matrices can be seen as submatrices of circulant matrices, therefore we will only consider circulant matrices within the proof of Theorem \ref{thm:phaseTrans}. In Remark \ref{ProofToepl} we will show how the proof applies to Toeplitz matrices.

\begin{proof}[Proof of Theorem \ref{thm:phaseTrans}]
 
	We first prove \textbf{Part \emph{i)}} of Theorem \ref{thm:phaseTrans}, hence, we assume that the measurement matrix $A$ is a biased partial random circulant matrix.  To make notations easier we enlarge $A$ by inserting zero-rows for indices not in $\Theta$. Hence, we define
	\begin{align}
	A=\mu \one+ \Phi_{\Theta}(b),
	\end{align}
	where 
	\begin{align*}
	(\Phi_{\Theta})_{k,j}=\begin{cases}
	b_{j-k\imod N} & \text{if}\quad k\in \Theta\\
	0 & \text{else}	
	\end{cases} \qquad \text{and}\qquad \one_{k,j}=\begin{cases}
	1 & \text{if}\quad k\in \Theta\\
	0 & \text{else}	
	\end{cases} ,
	\end{align*}
	and $b\in \R^N$ is the given sub-Gaussian vector. This matches the aforementioned measurement process; the vector $Ax_0$ is only enlarged by some zeros. Further we define $\beta_0$ to be the sparser of the two vectors $x_0$ and $\one-x_0$, i.e.,
	\begin{align}
	\beta_0=\begin{cases}
	x_0 & \text{if } \|x_0\|_0\le \|\one-x_0\|_0,\\
	\one-x_0 & \text{else}.
	\end{cases}
	\end{align}
	
	As in \cite{FlKei} we define the dual certificate to be
	\begin{align}
	\nu=\rho \one +\Phi \beta_0-M^{-1}\sprod{\Phi\beta_0,\one}\one, \qquad 	\text{where}\quad \rho=-\frac{\sigma^2}{4\mu},
	\end{align}
	and prove that $A^*\nu\in H_S^t$, where $S=\supp \beta_0$ and $t=\frac{M\sigma^2}{16}$. This particularly means we prove for $i\in [N]$
	\begin{align}\label{eqn:goal}
	\sprod{\nu,A e_i}=\sprod{A^*\nu, e_i}=(A^*\nu)_i\begin{cases}
	\le -t & \text{if } i \in S\\ \ge t  & \text{if } i \notin S.
	\end{cases}
	\end{align}
	A simple calculation yields
	\begin{align}
	\sprod{\nu,A e_i}&=\rho\mu M+\rho \sprod{\one, \Phi_\Theta e_i}+\sprod{\Phi_\Theta\beta_0,\Phi_\Theta e_i}-M^{-1}\sprod{\Phi_\Theta\beta_0,\one}\sprod{\one, \Phi_\Theta e_i}\\&=:\rho\mu M+\rho X_1(i)+X_2(i)-M^{-1}X_3(i).
	\end{align}
	
	We now estimate the numbers $X_1(i), X_2(i), X_3(i)$ for each $i\in S$ and $i\notin S$ separately.
	\newline 
	
	\noindent\textbf{Estimation of $X_1$:}\\
	 We start with $X_1$. For every $i\in [N]$,
	\begin{align}
	X_1(i)=\sprod{\one, \Phi_\Theta e_i}=\sum_{l\in \Theta} (\Phi_\Theta)_{l,i}=\sum_{l\in \Theta} b_{i-l\imod{N}}
	\end{align}
	is a sum of $M$ independent sub-Gaussian variables with sub-Gaussian norm $R$. Thus it follows from Part i) of Theorem \ref{ProbTools} that 
	\begin{align}
	\Pr(\abs{X_1(i)}\ge \theta_1)\le e\exp\left(-\frac{C\theta_1^2}{MR^2}\right),
	\end{align} 
	for every $i\in [N]$.
	The estimations of $X_2$ and $X_3$ are a slightly more involved. Let us start with $X_2$.
	\newline
	
	\noindent \textbf{Estimation of $X_2$:}\\
	 For every $i\in [N]$ it holds true that
	\begin{align*}
	X_2(i)=\sprod{\Phi_\Theta \beta_0, \Phi_\Theta e_i}=\sum_{j=1}^N\sum_{k\in S}(\Phi_\Theta)_{j,k}(\Phi_\Theta)_{j,i}=\sum_{k\in S}\sum_{j\in \Theta}b_{k-j\imod{N}}b_{i-j\imod{N}},
	\end{align*}
	and therefore
	\begin{align}
	\E (X_2(i))=\sum_{k\in S}\sum_{j\in \Theta}\sigma^2\delta_{k-j\imod{N},i-j\imod{N}},
	\end{align}
	where for  $j,k\in[N]$ the number $\delta_{j,k}$ is equal to one for $j=k$ and to zero otherwise. Now it holds true that $k-j\imod{N}=i-j\imod{N}$ if and only if $k=i$ and therefore
	\begin{align}
	\E (X_2(i))=\begin{cases}
	\sum_{j\in\Theta}\sigma^2 \delta_{i-j,i-j} & \text{if } i\in S\\
	0 & \text{if } i \notin S
	\end{cases}\quad=\begin{cases}
	M\sigma^2 & \text{if } i\in S\\
	0 & \text{if } i \notin S.
	\end{cases}
	\end{align}
	
	To estimate the deviation from this expected value we want to use Hanson-Wright inequality. Thus we want to define a map  $L(i):\R^N\to \R^N$ such that $\sprod{b,L(i)b}=X_2(i)$ for all $i\in [N]$. We therefore define the map $L(i):\R^N\to \R^N$, $(v_1,\dots,v_N)\mapsto (w_1,\dots,w_N)$ with
	\begin{align}
	w_j=\begin{cases}
	\sum_{k\in S}v_{k-i+j\imod{N}} & \text{if } j\in \Theta_i\\
	0 & \text{else},
	\end{cases}
	\end{align}
	where $\Theta_i:=i-\Theta$.	Then it indeed holds true that
	\begin{align}
	\sprod{b,L(i)b}=\sum_{j=1}^Nb_j (L(i)b)_j=\sum_{j\in \Theta_i} b_j\sum_{k\in S}b_{k-i+j\imod{N}}=\sum_{j\in \Theta}b_{i-j\imod{N}}\sum_{k\in S}b_{k-j\imod{N}}=X_2(i).
	\end{align}
	Thus, to apply the Hanson-Wright inequality we just need to estimate the Hilbert-Schmidt norm $\|L(i)\|_{HS}$ and the operator norm $\|L(i)\|$. Note that $L$ is a linear map and we therefore can rewrite it as matrix map. The corresponding matrix, which we also call $L(i)=(L_{j,l})_{j,l=1}^N$ is given by
	\begin{align}
	L(i)_{j,l}=\begin{cases}
	1 &\text{if}\; j\in \Theta_i\; \text{and}\; l\in S_{j-i}\\
	0 & \text{else},
	\end{cases}
	\end{align}  
	where $S_{j-i}=j-i+S$. Thus, one easily verifies that $\|L\|_{HS}=M|S|=Ms$. To estimate $\|L\|$ we use the Gershgorin circle Theorem \ref{thm:Gers}. For $j\in \Theta_i$ we have either $L(i)_{j,j}=1$ or $L(i)_{j,j}=0$ and $R_j=s-1$ or $R_j=s$. For $j\notin \Theta_i$ we have $L(i)_{j,j}=0$ and $R_j=0$. Thus by Gershgorin circle theorem all eigenvalues of $L(i)$ lie in the circle $D(0,s)$ and the operator norm can be estimated by $\|L(i)\|\le s$.
	Hanson-Wright inequality therefore implies that there is a universal constant $C>0$ such that for $i\notin S$
	\begin{align}
	\prb{\abs{X_2(i)}>\theta_2}\le 2\exp\left(-C\min\left(\frac{\theta_2^2}{R^4Ms},\frac{\theta_2}{R^2s}\right)\right),
	\end{align}
	and for $i\in S$
	\begin{align}
	\prb{\abs{X_2(i)-M\sigma^2}>\theta_2}\le 2\exp\left(-C\min\left(\frac{\theta_2^2}{R^4Ms},\frac{\theta_2}{R^2s}\right)\right).
	\end{align}
	\newline
	
	\noindent \textbf{Estimation of $X_3$:}\\
	The estimation of $X_3(i)$ is even more involved. We can simplify
	\begin{align*}
	X_3(i)=\sprod{\Phi\beta_0,\one}\sprod{\one, \Phi e_i}=\beta_0^*\Phi^*\one\one^*\Phi e_i=\sum_{k\in S}\sum_{m\in \Theta}\sum_{l\in \Theta}b_{k-m\imod{N}}b_{i-l\imod{N}},
	\end{align*}
	because $(\Phi^*\one\one^*\Phi)_{k,n}=\sum_{m\in \Theta}\sum_{l\in \Theta}b_{k-m\imod{N}}b_{n-l\imod{N}}$. We start by estimating the expected value of $X_3(i)$:
	\begin{align}\label{eqn:E3kn}
	E_{k,n}:&=\E{(\Phi^*\one\one^*\Phi)_{k,n}}=\sum_{m\in \Theta}\sum_{l\in \Theta}\E{\left(b_{k-m\imod{N}}b_{n-l\imod{N}}\right)}\\&=\sum_{m\in \Theta}\sum_{l\in \Theta}\begin{cases} \sigma^2 & \text{if}\; k-m\imod{N}=n-l\imod{N}\\ 0 & \text{else}\end{cases}.
	\end{align}
	Note that $k-m\imod{N}=n-l\imod{N}$ if and only if $k-m-n+l\imod{N}=0$. That is, if and only if $k-m-n+l\in \{-N,0,N\}$, because $k-m-n+l\in (-2N,2N)$. Let us try to identify when this is the case:
	
	First, suppose that $k<n$, then we have $k-n-m<0$, or more precisely either $k-n-m\in[-N,-1]$ or $k-n-m\in[-2N,-N-1]$. If the first case is true,  $l$ can only be chosen such that $k-n-m+l=0$, because $l\in [N]$ there is no possibility that  $k-n-m+l\in \{-N,N\}$. Analogously, if the second case is true, the only possibility to choose $l$ such that  $k-n-m+l\in\{-N,0,N\}$ is $l=m+n-k-N$. In other words for fixed $k<n$ and $m\in[N]$ there is exactly one $l\in [N]$ such that $k-m\imod{N}=n-l\imod{N}$. Similarly we can show that the same is true for $k\ge n$.
	
	However, the sum in \eqref{eqn:E3kn} is not over $m,l\in[N]$ but over the smaller subset $m,l\in \Theta$. Thus the number matching the criterion $k-m\imod{N}=n-l\imod{N}$ is smaller than $M=|\Theta|$, more precisely:
	\begin{align}
	E_{k,n}=\begin{cases}
	\abs{\Theta\cap \left((k-n+\Theta)\cup(N+k-n+\Theta)\right)}\sigma^2 & \text{if}\; k<n,\\
	M\sigma^2 &\text{if}\; k=n,\\
	\abs{\Theta\cap \left((k-n+\Theta )\cup(- N+k-n+\Theta)\right)}\sigma^2 & \text{if}\; k<n.
	\end{cases}=\abs{\Theta\cap(k-n+\Theta)}\sigma^2
	\end{align}
	
	It holds true that for $i\notin S$
	\begin{align}\label{eqn:EX3nS}
	\E(X_3(i))=\sprod{\beta_0,(E_{ki})_{k=1}^N}=\sum_{k\in S} E_{ki}=\sum_{k\in S}\abs{\Theta\cap(k-i+\Theta)}\sigma^2\in [0,sM\sigma^2]
	\end{align}
	and for $i\in S$ that
	\begin{align}
	\E(X_3(i))\in [M\sigma^2,sM\sigma^2].\label{eqn:EX3S}
	\end{align}
%	To justify this statement, let  $\Theta=\{m_1,\dots,m_M\}$, $n\in [N]$ and suppose for $i\in[M]$ that $m_i\in n+\Theta$. Then there is $m_j\in \Theta$ such that $m_i=n+m_j$ which is equivalent to $n=m_i-m_j$. Thus each $m_i\in \Theta$ is contained in the sets $m_j-m_i+\Theta$ for all $j\in [M]$. Thus each $m_i$ is contained in exactly $M-1$ sets of the form $n+\Theta$ for $n\neq 0$ and in $M$ sets if $n$ is also allowed to be zero. This implies $\sum_{n\in \Z, n\neq 0} \abs{\Theta\cap (n+\Theta)}=M(M-1)$ and $\sum_{n\in \Z, n\neq 0} \abs{\Theta\cap (n+\Theta)}=M^2$
%	
%	Hence, for $i\in S$
%	\begin{align}
%	\E(X_3(i)) = \sum_{k\in S}E_{ki}=&\sigma^2\sum_{k\in S,k<i}\abs{\Theta\cap \left((k-i+\Theta)\cup(N+k-i+\Theta)\right)}\\&+\sigma^2\sum_{k\in S,k> i}	\abs{\Theta\cap \left((k-i+\Theta )\cup(- N+k-i+\Theta)\right)}\le M(M-1)\sigma^2.
%	\end{align}
%	And for $i\in S$ it holds $\E(X_3(i)) = \sum_{k\in S}E_{ki}\in [M\sigma^2, M^2\sigma^2]$, because the sum in particular contains $E_{ii}=M\sigma^2$ and thus $\sum_{k\in S}E_{ki}\ge M\sigma^2$.
	
	Now we can compute the probability that $X_3(i)$ deviates from its expected value. We again aim to apply the Hanson-Wright inequality. For this we define $L^3(i):\R^N\to\R^N$, $(v_1,\dots,v_N)\mapsto (w_1,\dots,w_N)$ with
	\begin{align}
	w_l=\begin{cases}
	\sum_{m\in \Theta}\sum_{k\in S} v_{k-m\imod{N}} & \text{if}\; l\in i-\Theta,\\0&\text{else}.
	\end{cases}
	\end{align}
	This yields $X_3(i)=\sprod{b,L^3(i)b}$. To compute the Hilbert-Schmidt norm and operator norm of $L_3(i)$, we further define the matrices $(K^1_{n,m})_{n,m=1}^N$ and $(K^2_{i,j})_{i,j=1}^N$ by
	\begin{align}
	K^1_{n,m}=\begin{cases}
	1 & \text{if}\, n\in i-\Theta , m\in \Theta\\0 & \text{else},
	\end{cases} \quad \text{and} \quad 	K^2_{k,l}=\begin{cases}
	1 & \text{if}\, k\in \Theta , l\in -k+S\\0 & \text{else}.
	\end{cases}
	\end{align}
	It is then easy to verify, that $K^1K^2v=L_3(i)(v)$. Now the Hilbert-Schmidt norm of $K^1$ is given by $\|K^1\|_{\text{HS}}=M^2$ and of $K^2$ by $\|K^2\|_{\text{HS}}=Ms$. Thus
	\begin{align}
	\|L_3(i)\|_{\text{HS}}=\|K^1K^2\|_{\text{HS}}\le \|K^1\|_{\text{HS}}\|K^2\|_{\text{HS}}=M^3s.
	\end{align}
	On the other hand it holds true that
	\begin{align}
		(K^1K^2)_{m,n}=\begin{cases}
		\sum_{l\in \Theta\cap (S-n)}1 &\text{if}\; m\in i-\Theta\\
		0 & \text{else}.
		\end{cases}
	\end{align}
	By  Gershgorin circle theorem this yields  that
	\begin{align}
	\|L_3(i)\|=\|K^1K^2\|\le Ms,
	\end{align}
	since for every $n\in [N]$ it holds true that $\abs{\Theta\cap (S-n)}\le s$ and therefore that the column sum of $K^1K^2$ is smaller than $Ms$ for each column, since the $m$-th entry of each column of $K^1K^2$ is equal to zero if $m\notin i-\Theta$.
	By the Hanson-Wright inequality it follows
		\begin{align}
	\prb{\abs{X_3(i)-\E(X_3(i))}>\theta_3}\le 2\exp\left(-C\min\left(\frac{\theta_3^2}{R^4M^3s},\frac{\theta_3}{R^2Ms}\right)\right).
	\end{align}
	In particular it follows for $i\notin S$
	\begin{align}
	\prb{X_3(i)< -\theta_3}&\le 	\prb{X_3(i)<\E(X_3(i)) -\theta_3}\le \prb{X_3(i)-\E(X_3(i))< -\theta_3}\le	\prb{\abs{X_3(i)-\E(X_3(i))}>\theta_3}\\&\le 2\exp\left(-C\min\left(\frac{\theta_3^2}{R^4M^3s},\frac{\theta_3}{R^2Ms}\right)\right),
	\end{align}
	where we used \eqref{eqn:EX3nS} in the first step. For $i\in S$ it follows
	\begin{align}
	\prb{X_3(i)>\theta_3+M^2\sigma^2}&\le 	\prb{X_3(i)>\theta_3+\E(X_3(i))}\le \prb{X_3(i)-\E(X_3(i))> \theta_3}\\&\le	\prb{\abs{X_3(i)-\E(X_3(i))}>\theta_3}\le 2\exp\left(-C\min\left(\frac{\theta_3^2}{R^4M^3s},\frac{\theta_3}{R^2Ms}\right)\right),
	\end{align}
	where we used \eqref{eqn:EX3S}.
	
	Finally we are able to estimate the probability that Equation \eqref{eqn:goal} is true for $t=\frac{M\sigma^2}{16}$. To this end remember that $\rho=-\frac{\sigma^2}{4\mu}$ and choose
	\begin{align}
	\theta_1 =\frac{\mu M}{4}, \qquad \theta_2 =\frac{\abs{\rho}\mu M}{4}, \qquad \theta_3 =\frac{\abs{\rho}\mu M^2}{4}.
	\end{align}
	
	Using that by Equation \eqref{eqn1:main} we have in particular $M>2s$ we derive for  $i\in S$
	\begin{align}
	\sprod{\nu,A e_i}&=\rho\mu M-\abs{\rho} X_1(i)+X_2(i)-M^{-1}X_3(i)\ge \rho\mu M-\abs{\rho}\theta_1+M\sigma^2-\theta_2-s\sigma^2-M^{-1}\theta_3\\&=\frac{7\rho\mu M}{4}+(M-s)\sigma^2\ge\frac{7\rho\mu M}{4}+\frac{M}{2}\sigma^2 =-\frac{7}{16}M\sigma^2-\frac{1}{2}M\sigma^2=\frac{M\sigma^2}{16}
	\end{align}
	with a failure probability no larger than
	\begin{align}
	s\left(e\exp\left(-\frac{C\mu M}{16R^2}\right)+2\exp\left(-C\min\left(\frac{\rho^2\mu^2M}{16R^4s},\frac{\abs{\rho}\mu M}{4R^2s}\right)\right)+ 2\exp\left(-C\min\left(\frac{\rho^2\mu^2 M}{16R^4s},\frac{\rho\mu M}{4R^2s}\right)\right)\right).
	\end{align}
	
	On the other hand we can estimate for $i\notin S$
	\begin{align}
		\sprod{\nu,A e_i}&\le \rho\mu M +\abs{\rho}\theta_1+\theta_2+M^{-1}\theta_3=\frac{\rho \mu M}{4}=-\frac{M\sigma^2}{16}
	\end{align}
		with a failure probability no larger than
	\begin{align}
	(N-s)\left(e\exp\left(-\frac{C\mu M}{16R^2}\right)+2\exp\left(-C\min\left(\frac{\rho^2\mu^2M}{16R^4s},\frac{\abs{\rho}\mu M}{4R^2s}\right)\right)+ 2\exp\left(-C\min\left(\frac{\rho^2\mu^2 M}{16R^4s},\frac{\rho\mu M}{4R^2s}\right)\right)\right).
	\end{align}
	This finishes the proof of Part \emph{i)} of Theorem \ref{thm:phaseTrans}.
\newline

	Now we aim to prove \textbf{Part \emph{iii)}}. Applying Proposition \ref{prop:noise}, we particularly need to prove the boundedness of the dual certificate $\nu$, i.e., $\|\nu\|_2\le r$, for some $r>0$. First, note that 
	\begin{align}
	\|\nu\|_2^2\le M\rho^2+\sprod{\Phi\beta_0,\Phi\beta_0}.
	\end{align}
	Thus, we in particular need to bound $\sprod{\Phi\beta_0,\Phi\beta_0}$. It is easy to verify that
	\begin{align}
	\sprod{\Phi\beta_0,\Phi\beta_0}=\sum_{n\in \Theta} \sum_{k\in S}\sum_{l\in S}b_{k-n\imod{N}}b_{l-n\imod{N}}.
	\end{align}
	Because $k-n\imod{N}=l-n\imod{N}$ if and only if $k=l$ we obtain 
	\begin{align}
	\E(\sprod{\Phi\beta_0,\Phi\beta_0})=\sum_{n\in \Theta} \sum_{k\in S}\sum_{l\in S}\E(b_{k-n\imod{N}}b_{l-n\imod{N}})=\sum_{n\in \Theta} \sum_{k\in S}\E(b_{k-n\imod{N}}b_{k-n\imod{N}})=Ms\sigma^2.
	\end{align}
	
	To estimate the deviation of $\sprod{\Phi\beta_0,\Phi\beta_0}$ from its expected value we define $L:\R^N\to \R^N$, $(v_1,\dots v_N)\mapsto (w_1,\dots,w_N)$ with
	\begin{align}
	w_i=\begin{cases}
	\sum_{k\in S}v_{k-i\imod{N}} & \text{if } i\in \Theta\\ 0 &\text{else}.
	\end{cases}
	\end{align} 
	Then it holds true that $\sprod{\Phi\beta_0,\Phi\beta_0}=\sprod{Lb,Lb}:=\sprod{b,Kb}$, with $K:=L^*L$. To again apply the Hanson-Wright inequality we need to estimate the Hilbert-Schmidt norm $\|K\|_{\text{HS}}$ as well as the operator norm $\|K\|$ of $K$. Note that $L$ can be represented by the matrix $[L_{ij}]_{i,j=1}^N$, where
	\begin{align}
	L_{ij}=\begin{cases}
	1 & \text{if } i\in \Theta \text{ and } j\in S-i\\
	0& \text{else}
	\end{cases}=\begin{cases}
	1 & \text{if } i\in \Theta\cap S-j\\
	0& \text{else}
	\end{cases}.
	\end{align}

	Further $K$ can be represented by the matrix $[K_{kl}]_{k,l=1}^N$ with
	\begin{align}
	K_{kl}=\sum_{i=1}^N L_{ki}^*L_{il}=\sum_{i\in \Theta\cap(S-k)\cap (S-l)}1=\abs{\Theta\cap(S-k)\cap (S-l)}.
	\end{align}
	By Gershgorin circle theorem all eigenvalues of $K$ lie in the circle $D(0,\sum_{k\in[N]}K_{kl})$ for all $l\in [N]$ and 
	\begin{align}
	\sum_{k\in[N]}K_{kl}=\sum_{k\in[N]}\abs{\Theta\cap(S-k)\cap (S-l)}\le\sum_{k\in[N]}\abs{(S-k)\cap (S-l)} \le s^2,
	\end{align}
	where the last step follows from the following fact: Let $l\in [N]$ and $S-k=\{k_1,\dots, k_s\}$ then for $i\in [s]$ it holds true that $k_i\in K-(k_j-k_i)$ for each $j\in [s]$. Thus, each $k_i\in S-k$ is contained in exactly $s$ sets of the form $S-r$.
	Therefore, we can conclude $\|K\|\le s^2$ and $\|L\|=\sqrt{\lambda_{\text{max}}(L^*L)}=\sqrt{\lambda_{\text{max}}(K)}=s$, where $\lambda_{\text{max}}$ denotes the largest eigenvalue. Further it is easy to verify by the matrix representation that $\|L\|_{\text{HS}}=\sqrt{Ms}$ and therefore by \eqref{eqn:EstHS}
	\begin{align}
	\|K\|^2_{\text{HS}}=\|L^*L\|^2_{\text{HS}}\le \|L^*\|^2\|L\|^2_{\text{HS}}\le Ms^3.
	\end{align}

	By the Hanson-Wright inequality it now follows
	\begin{align}
	\prb{\abs{\sprod{\Phi\beta_0,\Phi\beta_0}-Ms\sigma^2}\ge Ms\sigma^2 }\le2 \exp\left(-c\min\left(\frac{M\sigma^4}{sR^4},\frac{M\sigma^2}{R^2s}\right)\right).
	\end{align}
	Thus $\sprod{\Phi\beta_0,\Phi\beta_0}\le 2Ms\sigma^2$ with high probability. And more precisely we derive for the dual certificate that
	$\|\nu\|_2^2\le M\sigma^2\left(\frac{\sigma^2}{16\mu^2}+2s\right)$ with probability $1-\varepsilon$ if $M\gtrsim \max\left(\frac{R^4}{\sigma^4},\frac{R^2}{\sigma^2}\right)\ln\left(\frac{2}{\varepsilon}\right)s$ for some $\varepsilon>0$.
	\newline
	
	It remains to prove \textbf{Part \emph{ii)}}. For this purpose it is enough to argue that the previous proof also applies to Toeplitz matrices of the form \eqref{eqn:BiasedCirc}. To see this, note that Toeplitz matrices are submatrices of circulant matrices. More precisely, let  $T=T(c_{-N+1},c_{-N+2},\dots, c_{N-1})\in \R^{N,N}$ then T is the submatrix of $\Phi([c_0,\dots,c_{N-1},c_{-N+1},\dots,c_{-1}])\in \R^{2N-1,2N-1}$ consisting of the first $N$ columns and rows of $\Phi$, i.e., $T_{i,j}=C_{i,j}$ for $i,j\in [N]$. 

Define $A=\mu\one+T_{\Theta}$ and $B=\mu\one+\Phi_{\Theta}$. For the dual certificate $\nu$ and $i\in [N]$ it then holds true that $\sprod{\nu,A e_i}=\sprod{\nu,B\tilde{e}_i}$, where $\tilde{e}_i$ denotes the $i$-th canonical vector in $\R^{2N-1}$. Further note that $Ax=B[x \quad\vec{0}]$ for $x\in \R^N$, where $\vec{0}\in \R^{N-1}$. 

Thus the former proof shows that $B^*\nu\in \tilde{H}_S^t:=\left\{w\in \R^{2N-1}: w_i\le -t\text{ for } i \in S \text{ and } w_i\ge t \text{ for } i\in [N]\setminus S\right\}$ for some $t>0$. Note that for $w\in\tilde{H}_S^t$, the last $N-1$ entries of $w$ can be arbitrary. Moreover it is easy to prove similarly as in \cite{FlKei} (cf. Proposition 2.3 in \cite{FlKei}) that this is equivalent to $$\left\{x\in [0, 1]^{2N-1}: Bx=B\one_S\right\}\cap \left\{x\in[0,1]^{2N-1}: x_{i}=0, i=N+1,\dots,2N-1 \right\}=\one_S\in \R^{2N-1}.$$  Hence, $\left\{x\in [0, 1]^{N}: Ax=A\one_S\right\}=\one_S\in \R^{N}$, under same assumption of Theorem \ref{thm:phaseTrans}.
\end{proof}

\begin{remark}
	Note that the (implicit) constant in Equation \eqref{eqn1:main} is doubled in comparison to the result in \cite{FlKei} for non structured matrices. The main reason is the pessimistic estimation of the expected value of $X_3(i)$ in Equation \eqref{eqn:EX3S}. Particularly, $\E{X_3(i)}=sM\sigma^2$ can only be true for very artificial choices of $\Theta$ and $S$. Suppose $\Theta=\{m,m+k,m+2k,\dots,m+(M-1)k\}$ such that $Mk=N$ and $S_i=\{0,k,2k,3k,(s-1)k\}$ then  $\E{X_3(i)}=sM\sigma^2$. However, if there is no $k\in S$ such that $\Theta=\{m,m+k,m+2k,\dots,m+(M-1)k\}$ then $\abs{\Theta\cap (k+\Theta)}\le (M-1)$. Moreover, if $S_i\neq \{0,k,2k,3k,(s-1)k\}$ for some $k$, then for $l\in S_i, l\neq k$, it holds true that $\abs{\Theta\cap (l+\Theta)}\le (M-1)$.
	
	One possibility to ensure that $\E(X_3(i))\le\frac{Ms}{2}$, which gives the same constant as in \cite{FlKei}, is to choose $\Theta=\{m_1>m_2>\dots m_M\}$ such that every possible distance between $m_i$ and $m_{i+1}$ arises at most $M/2$-times. This might be possible because we may assume that $M\le \frac{N}{2}$ by what the numerics indicate (see Section \ref{sec:numeric}). 
\end{remark}

%	
%As mentioned in the beginning we assumed in the proof of Theorem \ref{thm:phaseTrans} that the measurement matrix $A$ is a circulant matrix of the form \eqref{eqn:BiasedCirc}. Thus it remains to argue that the proof also applies to Toeplitz matrices of the form \eqref{eqn:BiasedCirc}.

 \label{sec:Symmetry}

\section{Numerical Validation}\label{sec:numeric}

To support our theory we aim to conclude by showing the results of the following numerical experiments. Basically we run the boxed-constrained basis pursuit \eqref{Pbin} and the box-constrained least squares \eqref{LSBin} for biased circulant matrices $\Phi(b)$ and Toeplitz matrices $T(b)$ for either Gaussian or Rademacher input vectors. For all experiments we choose the ambient dimension to be $N=500$. For each $s,M\in \left\{5i:i\in [100]\right\}$ we choose a random binary vector $x_0\in\R^N$ with $s$-non-zero elements. These $s$-non-zero positions of $x_0$ are chosen uniformly randomly from $1$ to $500$ without repetition (Matlab method \textit{randperm}). Then we constructed the different structured measurement matrices and run boxed constrained linear least squares (Matlab method \textit{lsqlin}) as well as the linear program \textit{linprog} with box constraints to obtain reconstructions $x_{\text{LS}}$ and $x_{\text{PBin}}$ of $x_0$. Finally we computed the $\ell_2$-error $\|x_{\text{LS}}-x_0\|_2$ and $\|x_{\text{PBin}}-x_0\|_2$ and repeated the computation for each combination of $s$ and $M$ $100$-times.

The measurement matrices were constructed as follows. For the Rademacher Toeplitz matrix we choose  $c=[c_1,\dots, c_N]\in \R^N$ and $r=[r_1,\dots,r_{N-1}]\in \R^{N-1}$ such that $c_i,r_j$ are either $0$ or $1$ with equal probability for $i\in[N]$, $j\in [N-1]$. Then we define $A=\text{toeplitz}(c,[c_1\; r])$, thus $c$ is the first column of $A$ and $[c_1\; r]$ the first row. Finally we choose a random subset  $\Theta\subset[N]$ of size $\abs{\Theta}=M$ by randomly permutating $[N]$ and choosing the first $M$ numbers (matlab method \textit{randperm}$(N,M)$). The measurement matrix $\Phi$ then consist of the columns $A$ which correspond to the subset $\Theta$. 

 The Gaussian Toeplitz matrix is constructed analogously but with $c\in \R^N$ and $r\in \R^{N-1}$ shifted Gaussian random vectors, i.e, $c=g_c+\one$ and $r=g_r+\one$, where $g_c\in \R^N$ and $g_r\in\R^{N-1}$ are Gaussian vectors.
 
To construct the partial circulant matrices we choose $c\in \R^N$ as in the Toeplitz case either as shifted Rademacher or Gaussian vector. Then we flipped $c$ and shifted it circularly by one position to obtain the row vector $r\in \R^N$ ($r=$\textit{circshift(fliplr}($c$),$1,2$)). Then we define $A=\text{toeplitz}(c,r)$ and the measurement matrix $\Phi$ as in the Toeplitz case.

\begin{figure}[H]
	\begin{center}
		\includegraphics[trim=40 200 50 200,clip, width=0.45\textwidth]{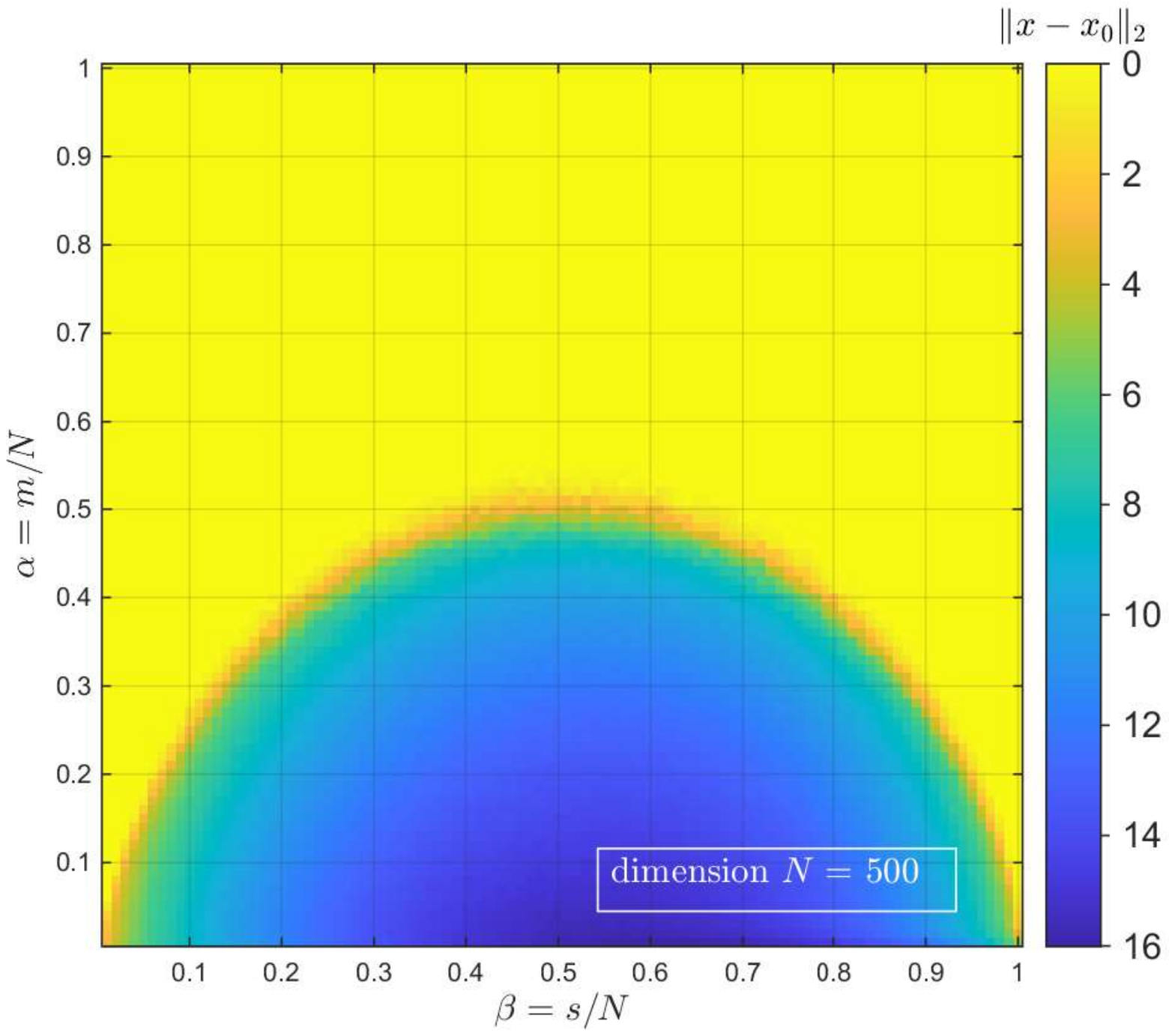}
		\includegraphics[trim=40 200 50 200,clip, width=0.45\textwidth]{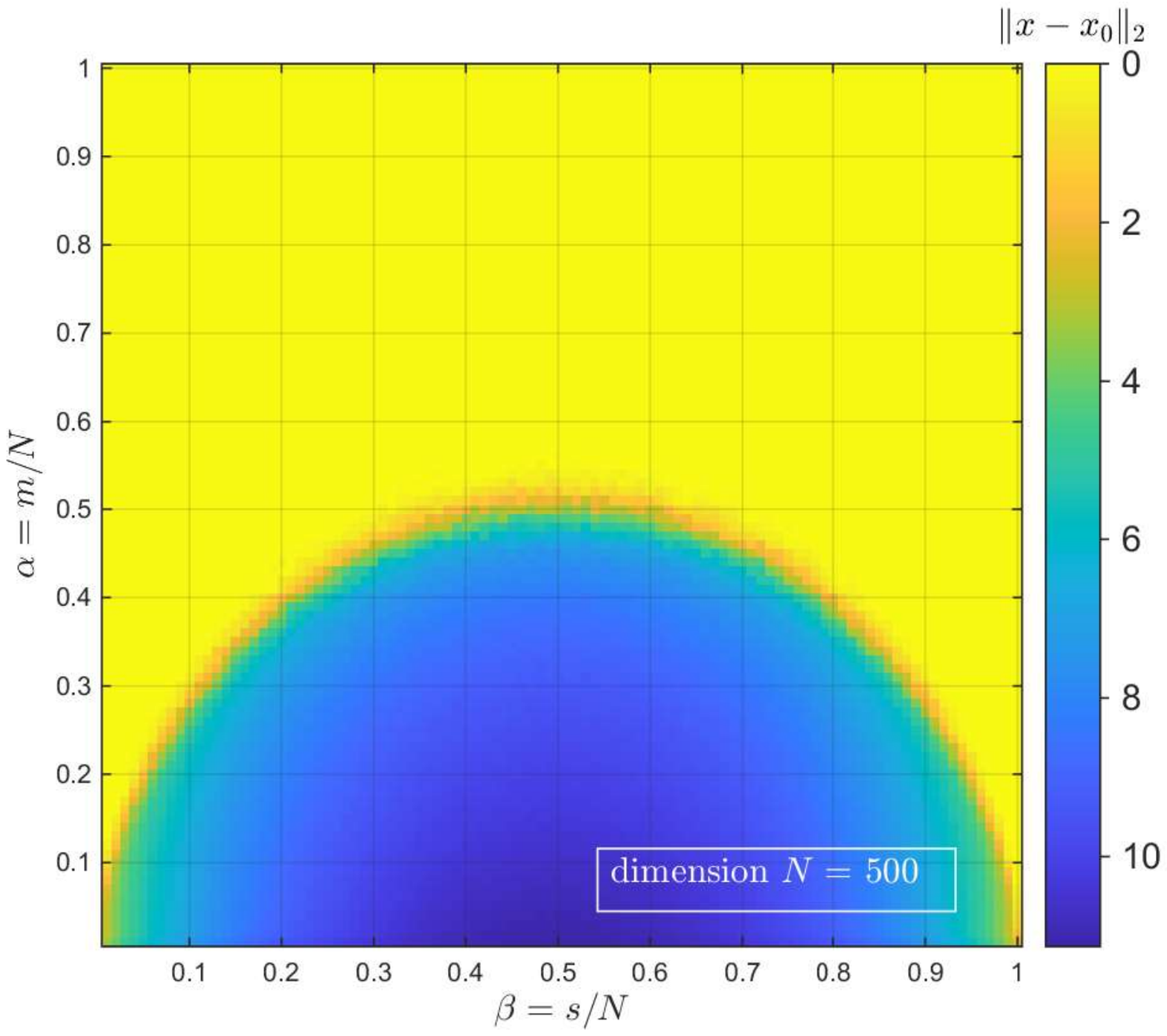}
		\caption{Reconstruction from biased Rademacher Toeplitz measurements via \eqref{Pbin} (left) and \eqref{LSBin} (right). The experiment yielding this figure is explained above.}
		\label{ToeplitzBernoulli}
	\end{center}
\end{figure}
\begin{figure}[H]
	\begin{center}
		\includegraphics[trim=40 200 50 200,clip, width=0.45\textwidth]{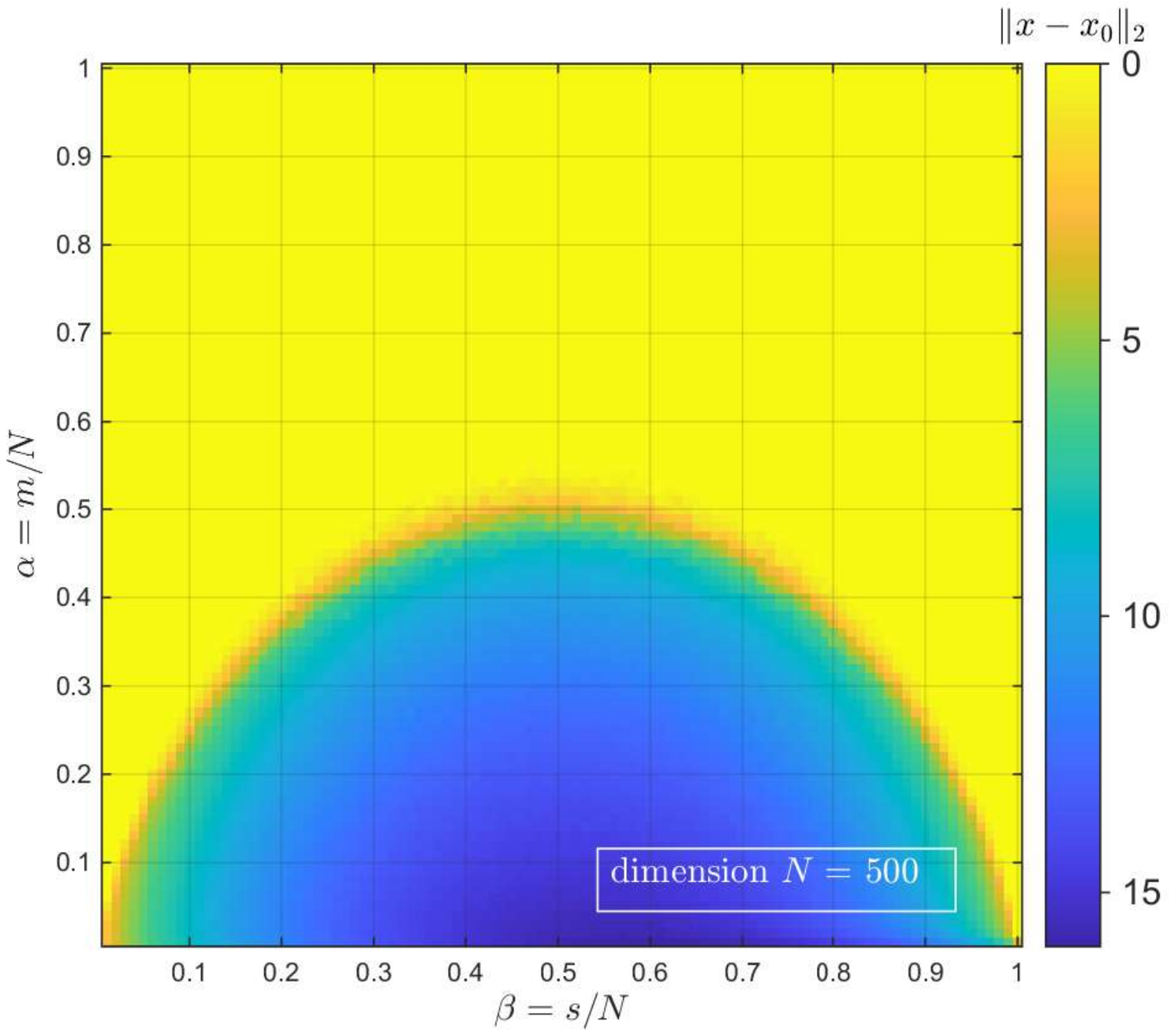}
		\includegraphics[trim=40 200 50 200,clip, width=0.45\textwidth]{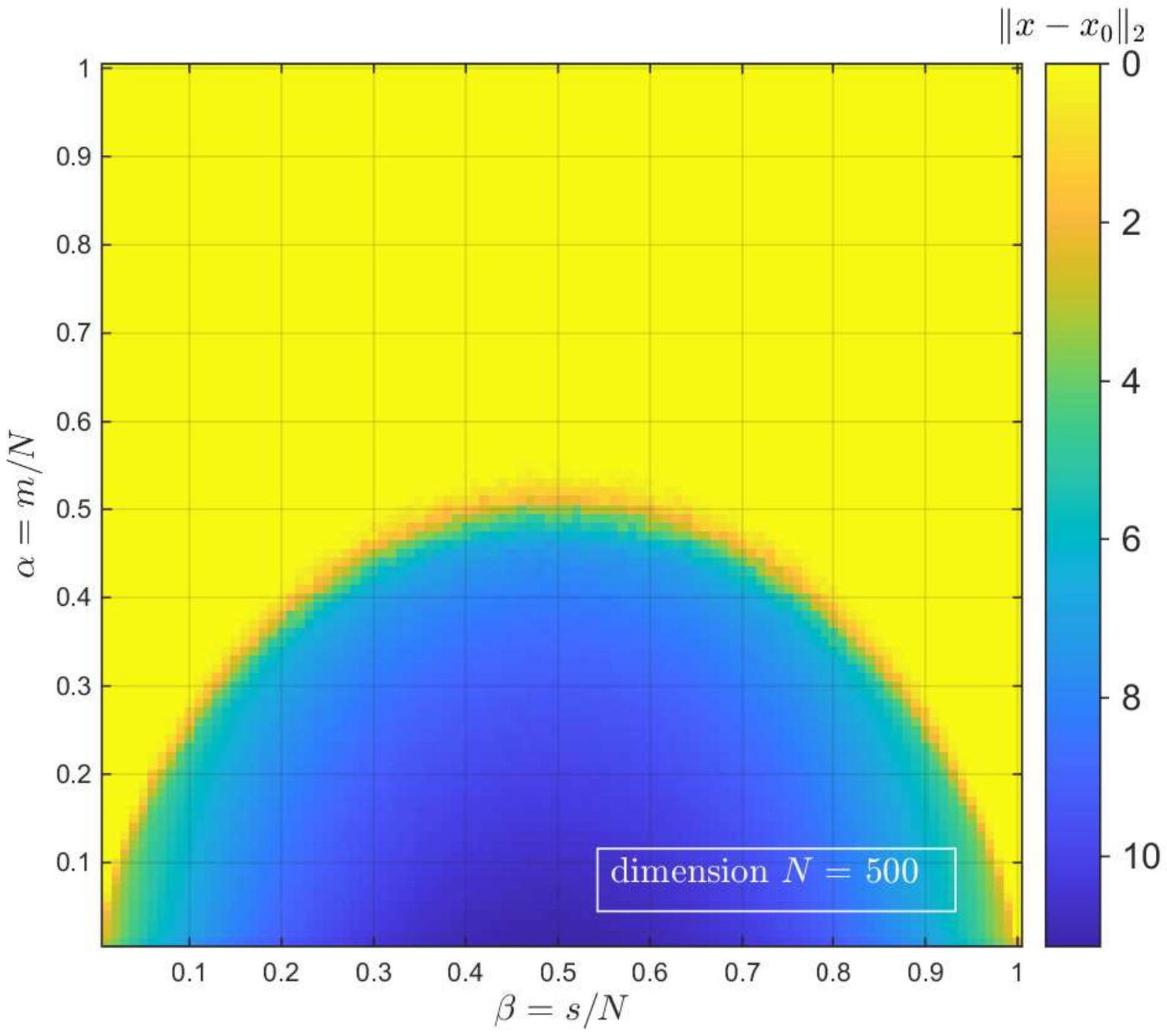}
		\caption{Reconstruction from biased Gaussian Toeplitz measurements via \eqref{Pbin} (left) and \eqref{LSBin} (right). The experiment yielding this figure is explained above.}
		\label{ToeplitzGaussian}
	\end{center}
\end{figure}
\begin{figure}[H]
	\begin{center}
		\includegraphics[trim=40 200 50 200,clip, width=0.45\textwidth]{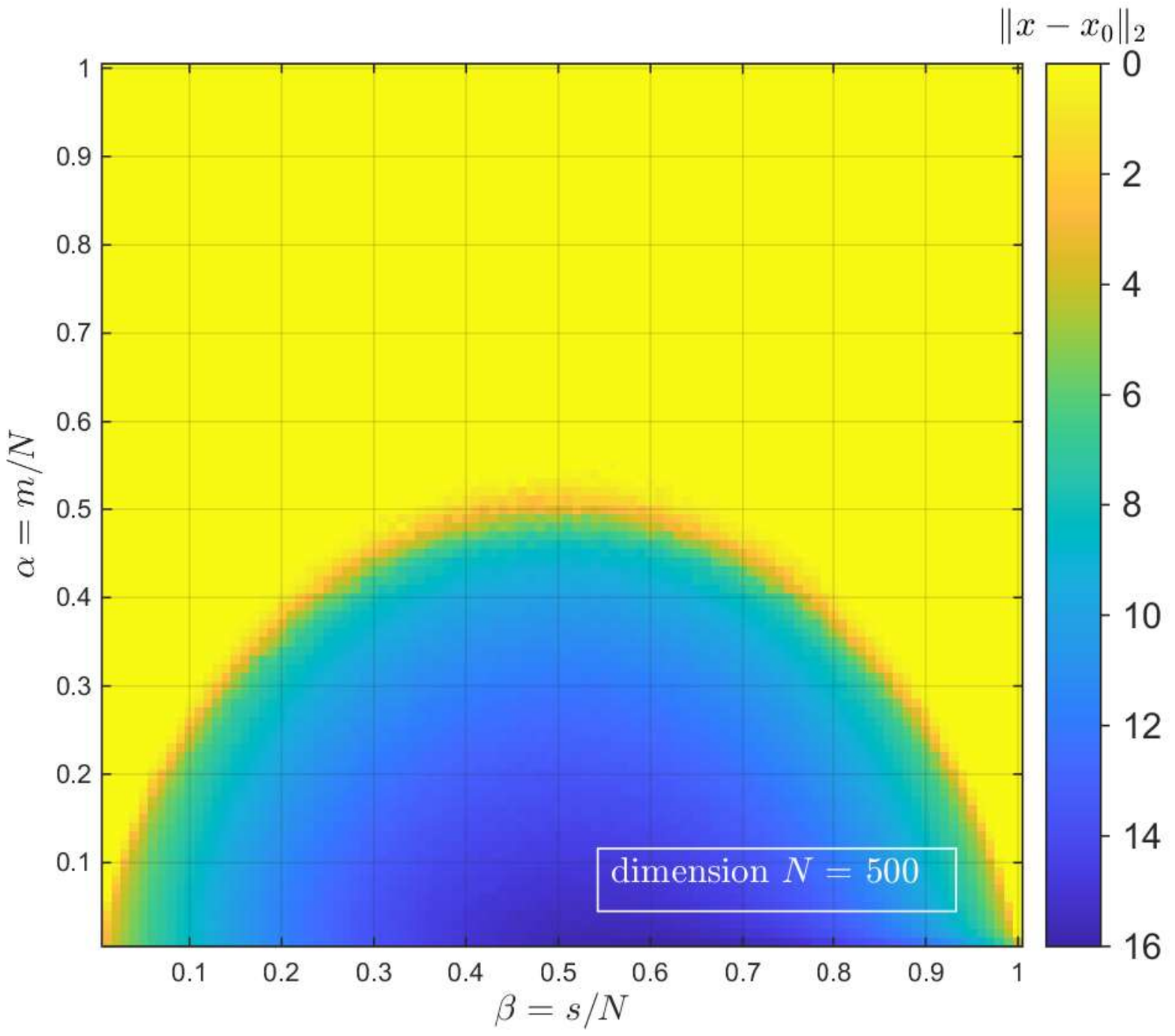}
		\includegraphics[trim=40 200 50 200,clip, width=0.45\textwidth]{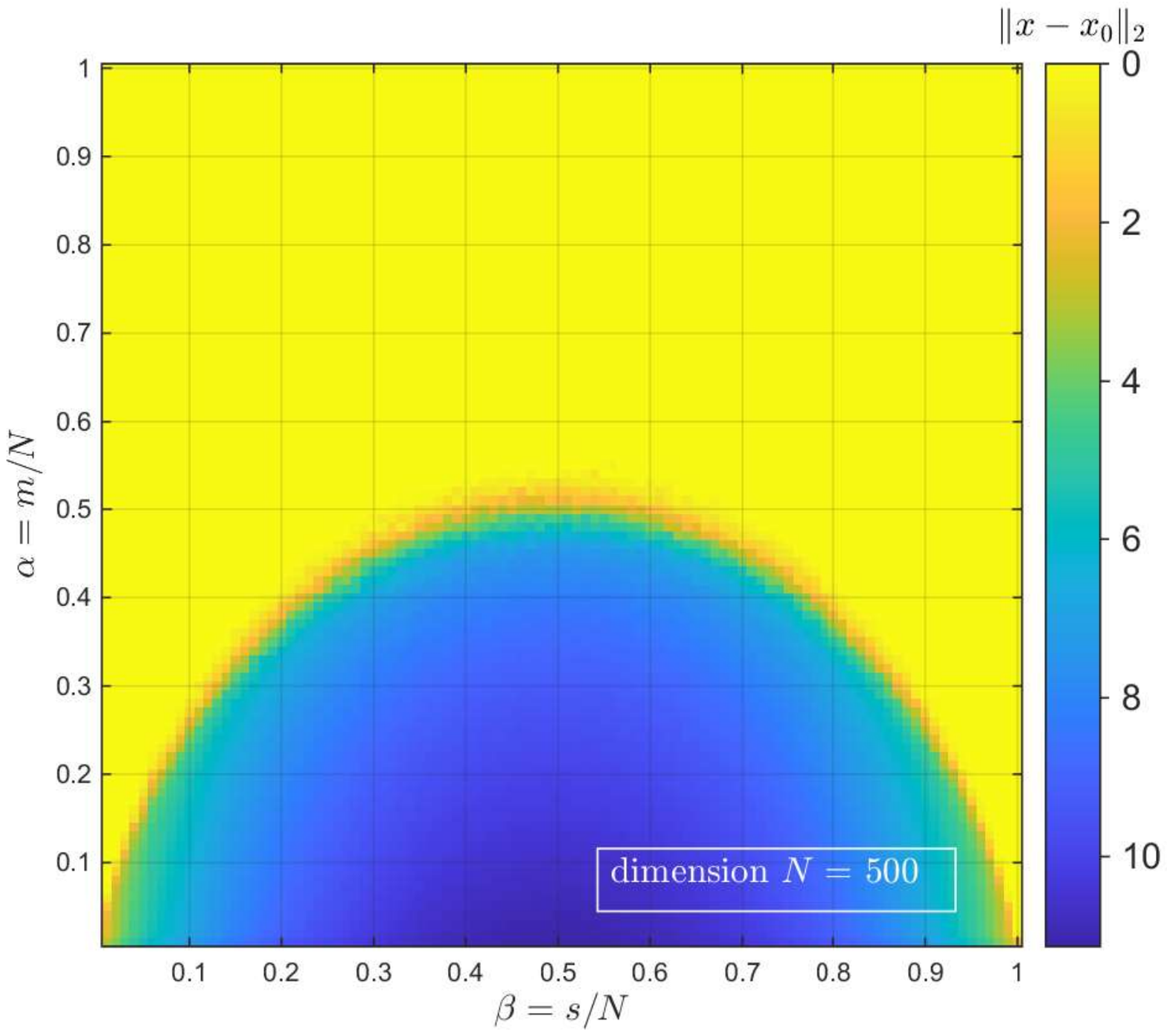}
		\caption{Reconstruction from biased Rademacher circulant measurements via \eqref{Pbin} (left) and \eqref{LSBin} (right). The experiment yielding this figure is explained above.}
		\label{CirculantBernoulli}
	\end{center}
\end{figure}
\begin{figure}[H]
	\begin{center}
		\includegraphics[trim=40 200 50 200,clip, width=0.45\textwidth]{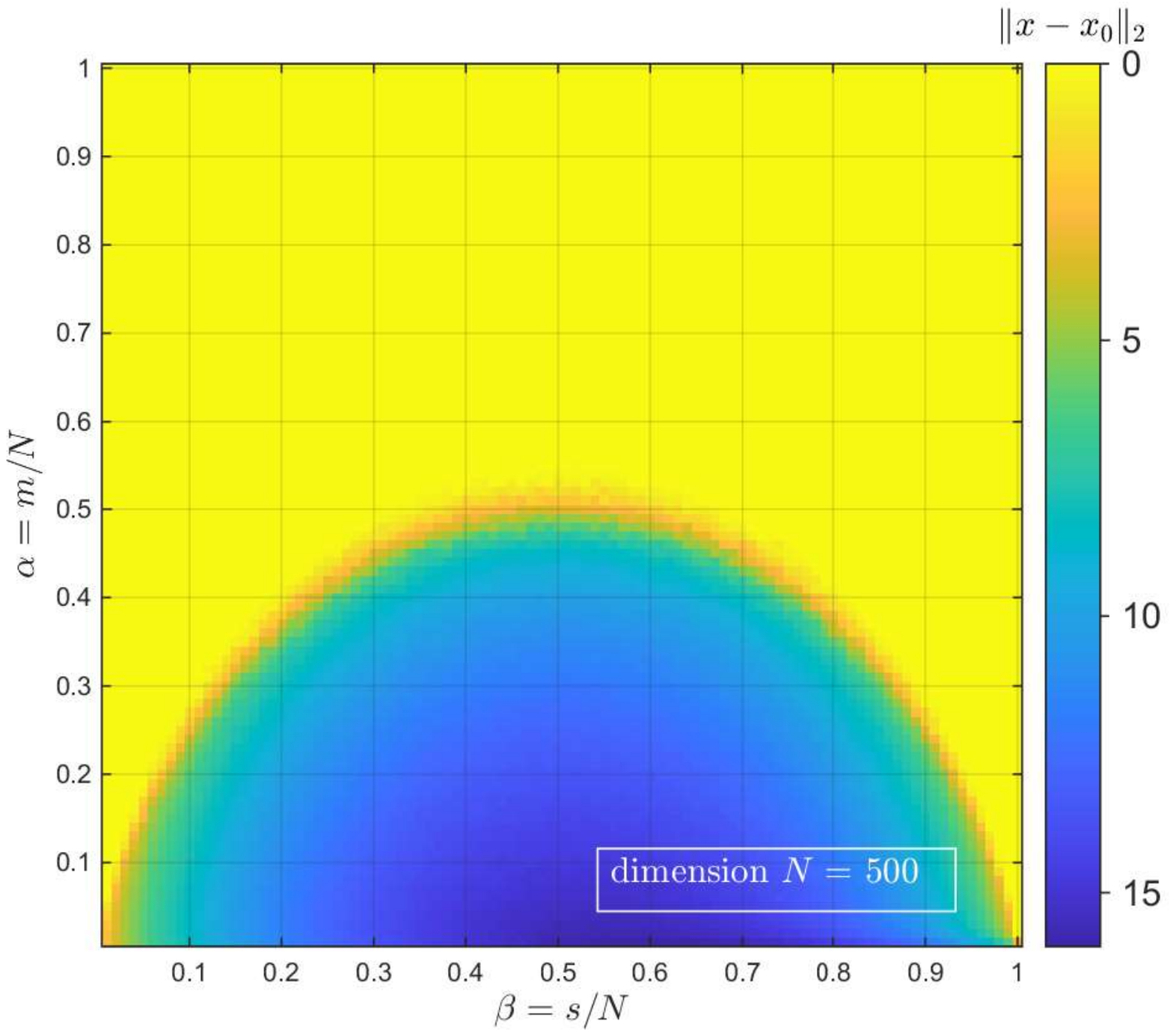}
		\includegraphics[trim=40 200 50 200,clip, width=0.45\textwidth]{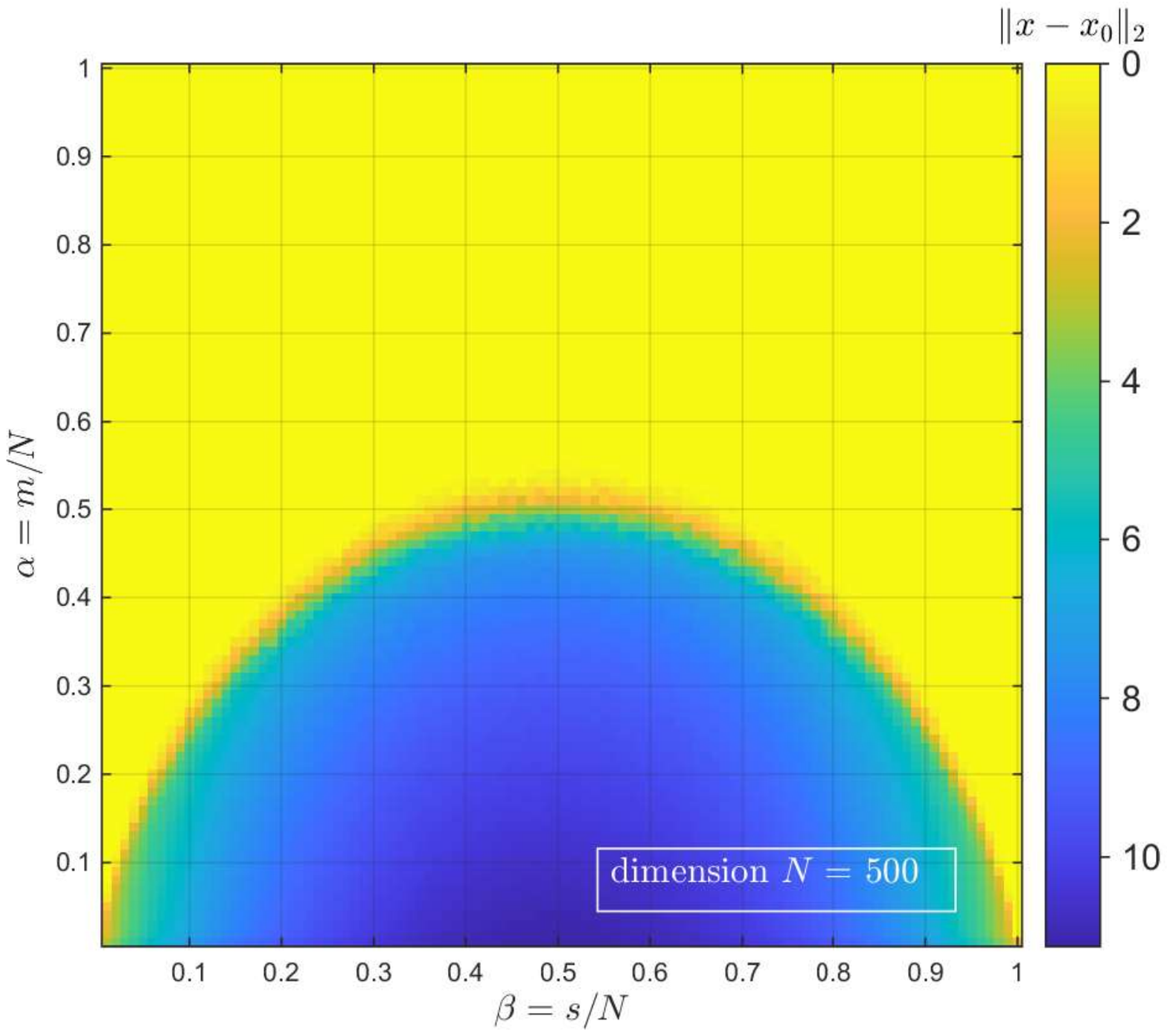}
		\caption{Reconstruction from biased Gaussian circulant measurements via \eqref{Pbin} (left) and \eqref{LSBin} (right). The experiment yielding this figure is explained above.}
		\label{CirculantGaussian}
	\end{center}
\end{figure}

 \label{sec:Num}
\section{Outlook and Future Work}
In this paper we have studied the reconstruction of a binary, sparse signal $x_0\in \{0,1\}^N$ from biased partial random circulant and Toeplitz measurements $y=Ax_0 +n$, where $n$ is some small noise vector and $A=\Phi(b)+\mu\one$. Here, $\Phi(b)$ a partial random circulant or Toeplitz matrix and $b$ is a sub-Gaussian vector.
In particular, we have studied the reconstruction via basis pursuit with box constraints and linear least squares with box constraints. Surprisingly, we could prove that the least squares algorithm, which usually does not promote sparsity, works comparably well. We further showed that we need as many measurements to recover an $s$-sparse signal as we need to recover an $(N-s)$-sparse signal and that this number is about $\min\{s,N-s\}\log{N}$. Finally we substantiated our theory by some numerical simulations.

The numerical simulations further indicate that, as for non-structured sub-Gaussian measurements, the number of necessary measurements $M$ to ensure unique recovery (independently of the sparsity) is not larger than $M=N/2$. In the following we would like to explain some thoughts on this phenomenon and a possible proof of it.

In order to prove the mentioned phenomenon, we would need to show that the measurement matrix $\Phi_{\Theta}$ is in general position and orthant symmetric; see the proof of Theorem III.2 in \cite{FlKei}, for comparison.

Note, that $\Phi_{\Theta}$ is in general position if for every subset $B\subset[N]$ of cardinality $|B|=M$ the eigenvalues of the matrix $(\Phi_\Theta^{B})^*\Phi_\Theta^B$ are positive, i.e., if $\lambda(\Phi_\Theta^{B})^*\Phi_\Theta^B)>0$ or $s_N(\Phi_\Theta^B)\ge0$, respectively, where $s_N(A)$ denotes the smallest singular value of a matrix $A\in \R^{N,N}$. Note that there exists a lot of work in the literature calculating the probability of $\lambda(\Phi^*\Phi)>0$ (\cite{Paulo, Gray}), but an analogous calculation for $\lambda((\Phi^{B})^*\Phi^B)$ is much more difficult. The main reason is that the eigenvalues of $\Phi$ itself can be computed very easily, which is not the case for a submatrix. In particular it has been shown in \cite{Paulo}, for Rademacher sequences $(b_0,\dots,b_{N-1})$ (among others) that for all $\varepsilon >0$ and large $N$
\begin{align}\label{eqn:EVCirc}
\prb{s_N(\Phi(b))\ge \varepsilon N^{-1}}\ge 1-C\varepsilon,
\end{align}
where $C>0$ is a constant only depending on $b$.
Since $\Phi^B_\Theta$ seems to be more unstructured than $\Phi$ itself one might hope that the probability of $\lambda((\Phi^{B}_\Theta)^*\Phi^B_{\Theta})>0$ is even higher.

However, the proof of Theorem III.2 in \cite{FlKei}, is also based upon the fact that the random part $\Phi$ of the measurement matrix $A=\mu \one +\Phi$ has an orthant symmetric kernel. This means, if for each diagonal matrix $S\in \R^{N,N}$ with diagonal in $\{-1,1\}^N$ and every measurable set $\Omega\in \R^{M,N}$, it holds $\prb{B S\in \Omega}=\prb{B\in \Omega}$, where $B$ is a matrix whose rows span the kernel of $\Phi$. 

From our point of view, it seems to be very unlikely that this is true for partial circulant matrices. The reason is the following. Suppose the kernel of $\Phi_{\Theta}$ is spanned by $v_1,\dots,v_m$, for some $m\in [N]$, and let $B\in \R^{m,N}$ the matrix with rows $v_1',\dots,v_m'$, then the rows of $BS$ span the kernel of $TS$. But $TS$ is in general no circulant marix. 

So for future work it is interesting to check if the kernel of a partial circulant matrix is indeed not or perhaps is orthant symmetric. Independently of the answer to this question it might be of own interest to prove an inequality in the spirit of \eqref{eqn:EVCirc} for partial circulant matrices. And, if the answer to the first question is negative, to find an alternative way to proof the upper bound on the necessary number of measurements in the order of $M\lesssim N/2$.
	\label{sec:UpBound}

\subsection*{Acknowledgements}
Sandra Keiper acknowledges support by the DFG Collaborative Research Center TRR 109 “Discretization in Geometry and Dynamics” and support by the Berlin Mathematical School.

%%% references
%\nocite{}

\bibliographystyle{amsplain}
\bibliography{Toeplitz}

\end{document}